\documentclass[11pt]{article}
\usepackage{amsfonts}
\usepackage{amssymb}
\usepackage{graphicx}
\usepackage{amsmath}
\usepackage{makeidx}
\usepackage[T1]{fontenc}
\usepackage{color}

\setcounter{MaxMatrixCols}{10}

\textheight 22.7truecm \textwidth 16truecm \oddsidemargin
-0.1truecm \evensidemargin -0.6truecm \topmargin -0.6cm \topskip
0cm \voffset -1.5cm
\newtheorem{theorem}{Theorem}

\newtheorem{adefinition}[theorem]{Definition}

\newtheorem{aexample}[theorem]{Example}

\newtheorem{lemma}[theorem]{Lemma}

\newtheorem{aproblem}[theorem]{Problem}
\newtheorem{proposition}[theorem]{Proposition}
\newtheorem{acomment}[theorem]{Comment}

\newtheorem{aremark}[theorem]{Remark}
\newenvironment{remark}{\begin{aremark}\rm}{\end{aremark}}

\newenvironment{proof}[1][Proof]{\textbf{#1.} }{\ \rule{0.5em}{0.5em}}
\numberwithin{equation}{section} \numberwithin{theorem}{section}

\DeclareMathOperator{\tr}{Tr}

\newcommand{\trip}{\|\kern-.08em |}
\newcommand\beq{\begin{equation}}
\newcommand\eeq{\end{equation}}

\makeatother
\makeindex
\begin{document}

\title{Szeg\"o-type Theorems for \\
One-Dimensional Schrodinger Operator \\
with Random Potential\\
(smooth case)}
\author{L. Pastur, M. Shcherbina \\
Mathematical Division, B.Verkin Institute \\
for Low Temperature Physics and Engineering \\
Kharkiv, Ukraine}
\maketitle
\date{}

\begin{abstract}
The paper is a continuation of work \cite{Ki-Pa:15} in which the general
setting for analogs of the Szeg\"o theorem for ergodic operators was given
and several interesting cases were considered. Here we extend the results of
\cite{Ki-Pa:15} to a wider class of test functions and symbols which
determine the Szeg\"o-type asymptotic formula for the one-dimensional
Schrodinger operator with random potential. We show that in this case the
subleading term of the formula is given by a Central Limit Theorem in the
spectral context, hence the term is asymptotically proportional to $L^{1/2}$%
, where $L$ is the length of the interval on which the Schrodinger operator
is initially defined. This has to be compared with the classical Szeg\"o
formula, where the subleading term is bounded in $L, \; L \to \infty$. We
prove an analog of standard Central Limit Theorem (the convergence of the
probability of the corresponding event to the Gaussian Law) as well as an
analog of the almost sure Central Limit Theorem (the convergence with
probability 1 of the logarithmic means of the indicator of the corresponding
event to the Gaussian Law). We illustrate our general results by
establishing the asymptotic formula for the entanglement entropy of free
disordered Fermions for non-zero temperature.
\end{abstract}


\section{Introduction}

The Szeg\"{o} theorem (also known as the strong Szeg\"{o} theorem) is an
interesting asymptotic formula for the restrictions of functions of the
Toeplitz operators as the size of the domain of restriction tends to
infinity. It has a number of applications and extensions pertinent to
analysis, mathematical physics, operator theory, probability theory and
statistics and (recently) quantum information theory, see \cite%
{Bi:12,Bo-Si:90,De-Co:13,Si:12,So:13}. In this paper we consider an
extension of the theorem viewed as an asymptotic trace formula for a certain
class of selfadjoint operators. We will start with an outline of the
continuous version of the Szeg\"{o} theorem presenting it in the form which
explains our motivation.

Let  $k:\mathbb{R}\rightarrow \mathbb{R}$ be an even and sufficiently smooth function
from $L^{1}(\mathbb{R})$,%
\begin{equation}
\Lambda =[-M,M],\;|\Lambda |=2M,  \label{la}
\end{equation}%
$K$ and $K_{\Lambda }:=K|_{\Lambda }$ be selfadjoint convolution operators in $%
L^{2}(\mathbb{R})$ and its restriction to $L^{2}(\Lambda )$ given by
\begin{eqnarray}\label{coco}
(Ku)(x) &=&\int_{-\infty }^{\infty }k(x-y)u(y)dy,\;x\in \mathbb{R}, \\
(K_{\Lambda }u)(x) \notag &=&\int_{-M}^{M}k(x-y)u(y)dy,\;x\in \Lambda
\end{eqnarray}%
Set $A=\mathbf{1}_{L^{2}(\mathbb{R})}+K$ and $A_{\Lambda }= \mathbf{1}%
_{L^{2}(\Lambda)}+K_{\Lambda}$ and consider $\varphi :\mathbb{R}\rightarrow
\mathbb{R}$ such that $\varphi (A_{\Lambda })$ is of trace class in $%
L^{2}(\Lambda )$. Then we have according to Szeg\"o and subsequent works%
\begin{equation}
\tr_\Lambda \varphi (A_{\Lambda })=|\Lambda |\int_{-\infty }^{\infty
}\varphi (a(t))dt+\mathcal{T}+o(1),\;|\Lambda |\rightarrow \infty ,
\label{szcl}
\end{equation}%
where $\tr_{\Lambda }$ is the trace in $L^2(\Lambda)$, $a(t)=1+\widehat{k}(t),\;t\in
\mathbb{R}$, $\widehat{k}$ is the Fourier transform of $k$ and the subleading term $\mathcal{T}$ is a $\Lambda$-independent
functional of $\varphi $ and $a$. We will call $\varphi $ and $a$ the test
function and the symbol respectively. 

Let $P=i\frac{d}{dx}$ be the selfadjoint operator in $L^{2}(\mathbb{R})$.
Then the r.h.s. of is $\tr_{\Lambda }\varphi (a_{\Lambda }(P))$, i.e., is
determined by the triple $(\varphi ,a,P)$, and since $a$ is even and smooth enough, we have $a(x)=b(x^2)$, hence the triple $(\varphi,b,P^2)$. It was proposed in \cite{Ki-Pa:15}
to consider instead of $P^2$ the Schrodinger operator $H=P^{2}+V$ where the
potential $V:\mathbb{R}\rightarrow \mathbb{R}$ is an ergodic process. It
seems that the replacement is of interest in itself since the ergodicity of
the potential guarantees the sufficient regular large $\Lambda $ behavior of
$\tr_{\Lambda }\varphi (a_{\Lambda }(H))$, hence a well defined asymptotic
formulas. Besides, the quantity $\tr_{\Lambda }\varphi (a_{\Lambda }(H))$
for certain $\varphi ,a$ and $V$ arises in quantum information theory and
quantum statistical mechanics, see \cite{Ei-Co:11}, Remark \ref{r:renyi} and
references therein.

Similar setting is also possible in the discrete case. In fact, it
is this case of which was initially studied by Szeg\"o for Toeplitz operators, while the continuous
case outlined above was considered later by Akhiezer, Kac and Widom, see e.g. \cite{Bo-Si:90} for a review.
We will also consider in this paper the discrete case.

In \cite{Ki-Pa:15} simple but rather non-trivial discrete cases were
studied. There $a(x)=x$ and $\varphi $ is $\ (x-x_0)^{-1}$ or $\log (x-x_0)$
where $x_0$ is outside the spectrum of the discrete Schrodinger operator
with ergodic potential (random and almost periodic). In particular, it was
shown that if the potential in the discrete Schrodinger equation is a
collection of independent identically distributed (i.i.d.) random variables,
then the leading term on the right of the analog of (\ref{szcl}) is again of
the order \ $|\Lambda |$ and is not random, but the subleading term is of
the order $|\Lambda |^{1/2}$ and is a Gaussian random variable. In fact, a
certain Central Limit Theorem for an appropriately normalized quantity $\tr%
_\Lambda \varphi (a_{\Lambda }(H))$ was established. In this paper we extend
this result for those $\varphi $ and $a$ which, roughly speaking, have the
Lipshitz derivative (see condition (\ref{afcon}) below). Note that similar
conditions were used Szeg\"o in his pioneering works, although the
conditions were seriously weakened in subsequent works, see \cite%
{Bo-Si:90,De-Co:13,Si:12,So:13}.

\section{Problem and Results}

Let $H$ be the one-dimensional Schrodinger operator in $l^{2}(\mathbb{Z}) $%
\begin{equation}
H=H_{0}+V,  \label{h}
\end{equation}%
where%
\begin{equation}
(H_{0}u)_{j}=-u_{j+1}-u_{j-1},\;j\in \mathbb{Z}  \label{h0}
\end{equation}%
and%
\begin{equation}
(Vu)_{ j}=V_{j}u_{j},\;j\in \mathbb{Z},\;|V_{j}|\leq \overline{V}<\infty
\label{qq}
\end{equation}%
is a potential which we assume to be a sequence of independent and
identically distributed (i.i.d.) random variables bounded for the sake of
technical simplicity.

The spectrum $\sigma (H)$ is a non-random closed set and
\begin{equation}
\sigma (H)\subset K:=[-2-\overline{V},2+\overline{V}],  \label{spk}
\end{equation}%
see \cite{Pa-Fi:92}.


Let also $a :\sigma(H)\rightarrow \mathbb{R}$ (symbol) and $%
\varphi:a(\sigma(H))\rightarrow \mathbb{R}$ (test function) be bounded
functions. Introduce the integer valued interval (cf. (\ref{la}))
\begin{equation}
\Lambda =[-M,M]\subset \mathbb{Z},\;|\Lambda |=2M+1  \label{lm}
\end{equation}%
and the operator $\chi _{\Lambda }: l^2(\mathbb{Z}) \to l^2(\Lambda)$ of
restriction, i.e., if $x=\{x_{j}\}_{j\in \mathbb{Z}}\in l^{2}(\mathbb{Z})$,
then $\chi _{\Lambda }x=x_{\Lambda }:=\{x_{j}\}_{j\in \Lambda }\in
l^{2}(\Lambda )$. For any operator $A=\{A_{jk}\}_{j,k\in \mathbb{Z}}$ in $%
l^2(\mathbb{Z})$ we denote (cf. (\ref{coco}))%
\begin{equation}
A_{\Lambda }:=\chi _{\Lambda }A\chi _{\Lambda }=\{A_{jk}\}_{j,k\in \Lambda }
\label{ala}
\end{equation}%
its restriction to $l^{2}(\Lambda )$. Note that the spectra of $A$ and $%
A_{\Lambda }$ are related as follows%
\begin{equation}
\sigma (A_{\Lambda })\subset \sigma (H)  \label{ssl}
\end{equation}

Our goal is to study the asymptotic behavior of
\begin{equation}
\mathrm{Tr}_{\Lambda }\varphi (a_{\Lambda }(H)):=\sum_{j \in \Lambda}
(\varphi (a_{\Lambda }(H)))_{jj},\;|\Lambda |\rightarrow \infty,
\label{trfl}
\end{equation}%
where
\begin{equation}  \label{trl}
\tr_{\Lambda }...=\tr \chi_{\Lambda}...\chi_{\Lambda}
\end{equation}
As was mentioned above, this problem dates back to works of Szeg\"{o} \cite%
{Gr-Sz:58} and has been extensively studied afterwards for the Toeplitz and
convolution operators, see e.g., \cite{Bo-Si:90,De-Co:13,Si:12} and
references therein. Recall that any sequence
\begin{equation}
\{A_{j}\}_{j\in \mathbb{Z}},\;\overline{A_{j}}=A_{-j},\;\sum_{j\in \mathbb{Z}%
}|A_{j}|<\infty  \label{aj}
\end{equation}%
determines a selfadjoint (discrete convolution) operator in $l^{2}(\mathbb{Z})$, cf. (\ref{coco})
\begin{equation}
A=\{A_{j-k}\}_{j,k\in \mathbb{Z}},\;(Au)_{j}=\sum_{k\in \mathbb{Z}%
}A_{j-k}u_{k}.  \label{Ac}
\end{equation}%
Let
\begin{equation*}
a(p)=\sum_{j\in \mathbb{Z}}A_{j}e^{2\pi ipj},\;p\in \mathbb{T}=[0,1)
\end{equation*}%
be the Fourier transform of $\{A_{j}\}_{j\in \mathbb{Z}}$. Then, according
to Szeg\"{o} (see e.g. \cite{Gr-Sz:58}), if $\varphi $ and $a$ are
sufficiently regular, then we have the two-term asymptotic formula (cf \ref%
{szcl})%
\begin{equation}
\tr_\Lambda \varphi (A_{\Lambda })=|\Lambda |\int_{\mathbb{T}}\varphi
(a(t))dt+\mathcal{T}+o(1),\;|\Lambda |\rightarrow \infty ,  \label{sf2}
\end{equation}%
where the subleading term $\mathcal{T}$ is again a $\Lambda $-independent functional of
$\varphi $ and $a$. Note that the traditional setting for the
Szeg\"{o} theorem uses the Toeplitz operators defined by the semi-infinite
matrix $\{A_{j-k}\}_{j,k\in \mathbb{Z}_{+}}$ and acting in $l^{2}(\mathbb{Z}%
_{+})$. The restrictions of Toeplitz operators are the upper left blocks $%
\{A_{j-k}\}_{j,k=0}^{L}$ of $\{A_{j-k}\}_{j,k=0}^{%
\infty }$. On the other hand, we will use in this paper the convolution
operators (\ref{Ac}) defined by the double infinite matrix $%
\{A_{j-k}\}_{j,k\in \mathbb{Z}}$, acting in $l^{2}(\mathbb{Z})$ and having
their central $L \times L, \; L=2M+1$ blocks as restrictions. The latter
setting seems more appropriate for the goal of this paper dealing with
ergodic operators where the setting seems more natural. The same setting is
widely used in multidimensional analogs of Szeg\"{o} theorem \cite{Bo-Si:90}.

Note now that the convolution operators in $l^{2}(\mathbb{Z}^{d})$ and $%
L^{2}(\mathbb{R}^{d}), \; d \ge 1$ admit a generalization, known as ergodic
(or metrically transitive) operators, see \cite{Pa-Fi:92}. We recall their
definition in the (discrete) case of $l^{2}(\mathbb{Z}).$

Let $(\Omega ,\mathcal{F},P)$ be a probability space and $T$ is an ergodic
automorphism of the space. A measurable map $A=\{A_{jk}\}_{j,k\in \mathbb{Z}%
} $ from $\Omega $ to bounded operators in $l^{2}(\mathbb{Z})$ is called
ergodic operator if we have with probability 1 for every $t\in \mathbb{Z}$
\begin{equation}
A_{j+t,k+t}(\omega )=A_{jk}(T^{t}\omega ),\;\forall j,k\in \mathbb{Z}.
\label{eom}
\end{equation}%
Choosing $\Omega =\{0\}$, we obtain from (\ref{eom}) that $A$ is a
convolution operator (\ref{Ac}). Thus, ergodic operators comprise a
generalization of convolution operators, while the latter can be viewed as
non-random ergodic operators.

It is easy to see that the discrete Schrodinger operator with ergodic
potential (\ref{h}) -- (\ref{qq}) is an ergodic operator. Moreover, if $%
\sigma (H)$ is the spectrum of $H$, then $\sigma (H)$ is non-random, for any
bounded and measurable $f:\sigma (H)\rightarrow \mathbb{R}$ the operator $%
f(H)$ is also ergodic and if $\{f_{jk}\}_{j,k\in \mathbb{Z}}$ is its matrix,
then $\{f_{jj}\}_{j\in \mathbb{Z}}$ is an ergodic sequence \cite{Pa-Fi:92}.
Besides, there exists a non-negative and non-random measure $N_{H}$ on $%
\sigma (H),\;N(\mathbb{R})=1$ such that
\begin{equation}
\mathbf{E}\{f_{jj}(H)\}=\mathbf{E}\{f_{00}(H)\}=\int_{\sigma (H)}f(\lambda
)N_{H}(d\lambda ).  \label{IDS}
\end{equation}%
The measure $N_{H}$ is an important spectral characteristic of selfadjoint
ergodic operators known as the Integrated Density of States \cite{Pa-Fi:92}.
In particular, we have for any bounded $f:\sigma (H)\rightarrow \mathbb{R}$
with probability 1%
\begin{equation}
\lim_{|\Lambda |\rightarrow \infty \Lambda }|\Lambda |^{-1}\tr_\Lambda
f(H_{\Lambda })=\int_{\sigma (H)}f(\lambda )N_{H}(d\lambda ).  \label{lim}
\end{equation}%
This plays the role of the Law of Large Numbers for $\tr f(H_{\Lambda })$.

Accordingly, it is shown in \cite{Ki-Pa:15} (see also formula (\ref{SL})
below) that the leading term in an analog of (\ref{sf2}) for an ergodic
Schrodinger operator is always

\begin{equation}
|\Lambda |\int_{\sigma (H)}\varphi (a(\lambda ))N_{H}(d\lambda ).
\label{ltl}
\end{equation}%
On the other hand, the order of magnitude and the form of the subleading
term depend on the "amount of randomness" of an ergodic potential and on the
smoothness of $\varphi $ and, especially, $a$, see e.g. \cite%
{Bo-Si:90,De-Co:13,El-Co:17,Ki-Pa:15,Pa-Sl:18,So:13} for recent problems and
results.

In this paper we consider the discrete Schrodinger operator with random
i.i.d. potential, known also as the Anderson model. Thus, our quantity of
interest (\ref{trfl}) as well as the terms of its asymptotic form are random
variables in general (except the leading term (\ref{ltl}), which is not
random). Correspondingly, we will prove below two types of asymptotic trace
formulas, both having the subleading terms of the order $|\Lambda |^{1/2}$
(cf. (\ref{sf2})). The formulas of the first type are valid in the sense of
distributions, i.e., are analogs of the classical Central Limit Theorem (see
Theorems \ref{t:clt} and \ref{t:renyi}), while the formulas of the second
type are valid with probability 1, i.e., are analogs of the so called almost
sure Central Limit Theorem (see Theorem \ref{t:asclt}).

\begin{theorem}
\label{t:clt} Let $H$ be the ergodic Schrodinger operator (\ref{h}) -- (\ref%
{qq}) with a bounded i.i.d. potential and let $\sigma (H)$ be its spectrum.
Consider bounded functions $a:\sigma (H)\rightarrow \mathbb{R}$ and $%
\varphi:a(\sigma (H))\rightarrow \mathbb{R}$ and assume that $a$, $\varphi $
and $\gamma :=\varphi \circ a:\sigma (H)\rightarrow \mathbb{R}$ admit
extensions $\widetilde{a}$, $\widetilde{\varphi }$ and $\widetilde{\gamma }$
on the whole axis such that their Fourier transforms $\widehat{a}$, $%
\widehat{\varphi }$ and $\widehat{\gamma }$ satisfy the conditions
\begin{equation}  \label{afcon}
\int_{-\infty }^{\infty }(1+|t|^{\theta})|\widehat{f }(t)|dt<\infty,\;\;%
\theta >1, \;\; f=a,\varphi, \gamma.
\end{equation}
Denote%
\begin{equation}  \label{SL}
\Sigma _{\Lambda }=|\Lambda |^{-1/2}\Big(\tr_{\Lambda }\varphi (a_{\Lambda
}(H))-|\Lambda |\int_{\sigma (H)}\gamma (\lambda )N_H(d\lambda )\Big)
\end{equation}%
and
\begin{equation}
\sigma _{\Lambda }^{2}=\mathbf{E}\{\Sigma _{\Lambda }^{2}\}.  \label{sil}
\end{equation}%
Then:

(i) there exists the limit
\begin{equation}
\lim_{\Lambda \rightarrow \infty }\sigma _{\Lambda }^{2}=\sigma ^{2}
\label{sili}
\end{equation}%
where
\begin{equation}
\sigma ^{2}=\sum_{l\in \mathbb{Z}}C_{l}  \label{si1}
\end{equation}%
with
\begin{equation}
C_{j}=\mathbf{E}\{\overset{\circ }{\gamma }_{00}(H)\overset{\circ }{\gamma }%
_{jj}(H)\},\;\overset{\circ }{\gamma }_{jj}(H)=\gamma _{jj}(H)-\mathbf{E}%
\{\gamma _{jj}(H)\},  \label{gac}
\end{equation}%
and also%
\begin{eqnarray}
\sigma ^{2} &=&\mathbf{E}\left\{ \left( \mathbf{E}\left\{ A_{0}|\mathcal{F}%
_{0}^{\infty }\right\} -\mathbf{E}\left\{ A_{0}|\mathcal{F}_{1}^{\infty
}\right\} \right) ^{2}\right\}  \label{si2} \\
&=&\mathbf{E}\left\{ \mathbf{Var}\left\{ \mathbf{E\{}A_{0}|\mathcal{F}%
_{0}^{\infty }\right\} |\mathcal{F}_{1}^{\infty }\}\right\}  \notag
\end{eqnarray}%
where%
\begin{equation}
A_{0}=V_{0}\int_{0}^{1}\gamma _{00}^{\prime }(H|_{V_{0}\rightarrow uV_{0}})du
\label{aa0}
\end{equation}%
and $\mathcal{F}_{a}^{b},\;-\infty \leq a\leq b\leq \infty $ is the $\sigma $%
-algebra generated by $\{V_{j}\}_{j=a}^{b}$;

(ii) if $\gamma $ is non constant monotone function on the spectrum of $H,$
then
\begin{equation}
\sigma ^{2}>0  \label{sipos}
\end{equation}%
and we have
\begin{equation}
\mathbf{P\{}\sigma ^{-1}\Sigma _{\lbrack -M,M]}\in \Delta \}=\Phi (\Delta
)+o(1),\;M\rightarrow \infty ,  \label{dlim}
\end{equation}%
where $\Delta \subset \mathbb{R}$ is an interval and $\Phi $ is the standard
Gaussian law (of zero mean and unit variance).
\end{theorem}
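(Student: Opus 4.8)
The plan is to peel off the classical Szeg\"o remainder so as to reduce $\Sigma_\Lambda$ to a normalized sum of the stationary sequence of diagonal elements $\gamma_{jj}(H)$, and then to prove the central limit theorem for that sum by a martingale approximation exploiting the product structure of the i.i.d. potential. Since $|\Lambda|\int_{\sigma(H)}\gamma\,dN_H=\mathbf{E}\{\mathrm{Tr}_\Lambda\gamma(H)\}$ by (\ref{IDS}), I would first rewrite (\ref{SL}) as
\[
\Sigma_\Lambda=|\Lambda|^{-1/2}\sum_{j\in\Lambda}\overset{\circ}{\gamma}_{jj}(H)+|\Lambda|^{-1/2}R_\Lambda,\qquad R_\Lambda:=\mathrm{Tr}_\Lambda\varphi(a_\Lambda(H))-\mathrm{Tr}_\Lambda\gamma(H),
\]
and show that the Szeg\"o remainder $R_\Lambda$ is bounded in $L^2(\Omega)$ uniformly in $\Lambda$, so that $|\Lambda|^{-1/2}R_\Lambda\to0$. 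Writing $\varphi$ and $a$ through their Fourier integrals and using Duhamel's formula to compare $e^{it a_\Lambda(H)}$ with $\chi_\Lambda e^{it a(H)}\chi_\Lambda$, the difference is governed by the commutators $[\chi_\Lambda,a(H)]$, which are localized near the two endpoints $\pm M$; condition (\ref{afcon}) with $\theta>1$ makes the resulting $t$-integrals converge and gives each endpoint an $O(1)$ contribution. This leaves $\Sigma_\Lambda=\widetilde\Sigma_\Lambda+o(1)$ with $\widetilde\Sigma_\Lambda:=|\Lambda|^{-1/2}\sum_{j\in\Lambda}\overset{\circ}{\gamma}_{jj}(H)$.

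For part (i) I would use that $\{\gamma_{jj}(H)\}$ is stationary, so $\mathbf{E}\{\widetilde\Sigma_\Lambda^2\}=|\Lambda|^{-1}\sum_{j,k\in\Lambda}C_{j-k}$ with $C_l$ as in (\ref{gac}); Ces\`aro summation then yields (\ref{sili})--(\ref{si1}) once $\sum_l|C_l|<\infty$. The summability I would extract from quasi-locality of the functional calculus: the Fourier bound (\ref{afcon}) on $\widehat\gamma$ together with finite-speed-of-propagation (Combes--Thomas type) estimates for $e^{itH}$ gives spatial decay both of the off-diagonal elements of $\gamma(H)$ and of the dependence of $\gamma_{jj}(H)$ on the far-away potential values, forcing $C_l\to0$ fast enough. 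To reach the equivalent form (\ref{si2}) I would exploit that $H-H^{(0)}=V_0e_{00}$ is rank one, where $H^{(0)}=H|_{V_0\to0}$: by the fundamental theorem of calculus and cyclicity of the trace,
\[
\mathrm{Tr}\bigl(\gamma(H)-\gamma(H^{(0)})\bigr)=\int_0^1 V_0\bigl(\gamma'(H|_{V_0\to uV_0})\bigr)_{00}\,du=A_0,
\]
so that $A_0$ defined by (\ref{aa0}) is exactly the total change of the diagonal sum produced by switching on $V_0$.

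Next I would set up the martingale representation. With the decreasing filtration $\{\mathcal{F}_k^\infty\}$ the centered sum $S=\sum_j\overset{\circ}{\gamma}_{jj}(H)$ telescopes into the orthogonal, stationary, ergodic reverse-martingale differences $d_k=\mathbf{E}\{S|\mathcal{F}_k^\infty\}-\mathbf{E}\{S|\mathcal{F}_{k+1}^\infty\}$. Splitting $S$ as $A_0$ plus a remainder that does not depend on $V_0$, the remainder is annihilated by $\mathbf{E}\{\cdot|\mathcal{F}_0^\infty\}-\mathbf{E}\{\cdot|\mathcal{F}_1^\infty\}$, because $V_0$ is independent of the other $V_j$; hence $d_0=\mathbf{E}\{A_0|\mathcal{F}_0^\infty\}-\mathbf{E}\{A_0|\mathcal{F}_1^\infty\}$ and $\sigma^2=\mathbf{E}\{d_0^2\}$, which is precisely (\ref{si2}) (the two forms there being related by the tower property). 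For part (ii), $\sigma^2=\mathbf{E}\{d_0^2\}=0$ would force $\mathbf{E}\{A_0|\mathcal{F}_0^\infty\}$ to be $\mathcal{F}_1^\infty$-measurable, i.e.\ independent of $V_0$; but when $\gamma$ is monotone and non-constant, $\gamma'$ has a fixed sign, so $\bigl(\gamma'(H|_{V_0\to uV_0})\bigr)_{00}$ keeps that sign and $A_0$ is strictly monotone in $V_0$, whence $\mathbf{E}\{A_0|\mathcal{F}_0^\infty\}$ genuinely depends on $V_0$ and $\sigma^2>0$. The Gaussian limit (\ref{dlim}) would then follow from the martingale central limit theorem for stationary ergodic square-integrable martingale differences applied to $\widetilde\Sigma_\Lambda=|\Lambda|^{-1/2}\sum_k d_k^{(\Lambda)}$, after verifying that the $\Lambda$-dependent increments $d_k^{(\Lambda)}$ may be replaced by their infinite-volume stationary counterparts with an $L^2$-error of order $o(|\Lambda|^{1/2})$ -- again via the locality above.

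The genuinely technical points are the Szeg\"o reduction of the first paragraph and the locality estimates underlying the rest. The reduction requires the commutator $[\chi_\Lambda,a(H)]$ and the Duhamel remainder to be controlled in trace norm uniformly in $\Lambda$ using only the smoothness budget (\ref{afcon}); the factor $|t|$ produced by Duhamel is exactly what demands $\theta>1$. I expect the main obstacle to be the second point: converting the Fourier decay of $\widehat a,\widehat\varphi,\widehat\gamma$ into quantitative spatial decay of the dependence of $\gamma_{jj}(H)$ on distant $V_k$, since this single estimate must simultaneously deliver the summability of $C_l$ and the validity of the martingale approximation, thereby coupling the analytic smoothness hypotheses to the probabilistic weak dependence needed for both the variance formula and the central limit theorem.
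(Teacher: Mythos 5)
Your proposal is correct in substance and, for most of its length, follows the same architecture as the paper's proof: the reduction from $\tr_{\Lambda}\varphi(a_{\Lambda}(H))$ to $\tr_{\Lambda}\gamma(H)$ via the Fourier representation, the Duhamel formula and a Combes--Thomas-type bound for $e^{itH}$ is exactly the content of Lemmas \ref{l:ctu}, \ref{l:fdec} and \ref{l:fdif}; the quantitative locality of $\gamma_{00}(H)$ in the distant potential values, which you rightly single out as the crux, is Lemma \ref{l:asin} (the $C/p^{\theta}$ bound); and your rank-one/telescoping identification of the increment $d_{0}=\mathbf{E}\{A_{0}|\mathcal{F}_{0}^{\infty}\}-\mathbf{E}\{A_{0}|\mathcal{F}_{1}^{\infty}\}$, including the remark that the $V_{0}$-independent part of the sum is annihilated by the difference of conditional expectations, is precisely the martingale decomposition of Lemma \ref{l:clt1} (formulas (\ref{tvt0})--(\ref{Mmla})). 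Where you genuinely diverge is the CLT engine: the paper feeds the $C/p^{\theta}$ estimate into the approximation condition (\ref{dc}) of Proposition \ref{p:ibli} (Ibragimov--Linnik: approximation of the ergodic sequence by strongly mixing ones), and that single proposition delivers both the summability of $C_{l}$ and the Gaussian limit; you instead prove the variance convergence directly by Ces\`aro summation and obtain (\ref{dlim}) from the CLT for stationary ergodic martingale differences via a Gordin-type coboundary approximation. This is a legitimate alternative resting on the same locality input; its cost is that your infinite-volume $S=\sum_{j}\overset{\circ}{\gamma}_{jj}(H)$ does not converge, so the $d_{k}$ must be defined as limits of finite-volume increments with $L^{2}$-control of the replacement error -- exactly the step (\ref{lim2})--(\ref{gamd}) that the paper carries out, and which you acknowledge but do not execute; its payoff is that (\ref{si2}) emerges automatically as $\mathbf{E}\{d_{0}^{2}\}$. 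Your positivity argument (strict monotonicity of $A_{0}$ in $V_{0}$ contradicting $V_{0}$-independence) is a mild variant of the paper's, and even avoids its normalization ``zero in $\supp F$''; note, though, that both versions tacitly require $(\gamma'(H|_{V_{0}\rightarrow uV_{0}}))_{00}$ to be of strict fixed sign, i.e., that the spectral measure $(\mathcal{E}_{H}(d\lambda))_{00}$ charges the set where $\gamma'\neq 0$.

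One quantitative claim is wrong, though harmlessly so: under (\ref{afcon}) with $\theta>1$ the Szeg\"o remainder $R_{\Lambda}$ is \emph{not} $O(1)$ per endpoint. The estimate available from the decay $|(a(H))_{jk}|\leq C|j-k|^{-\theta}$ of Lemma \ref{l:fdec} is
\begin{equation*}
|R_{\Lambda}|\leq C'\sum_{j\in\Lambda}\Big(\sum_{k\in\mathbb{Z}\setminus\Lambda}|j-k|^{-2\theta}\Big)^{1/2}=O\big(L^{3/2-\theta}\big),\qquad 1<\theta<3/2,
\end{equation*}
which is $o(L^{1/2})$ but unbounded in $\Lambda$; boundedness requires $\theta>3/2$, as the paper notes at the end of the proof of Lemma \ref{l:fdif}. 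Since the theorem only needs $|\Lambda|^{-1/2}R_{\Lambda}\rightarrow 0$, and the bound above is deterministic (uniform in the potential, which is also what makes the $L^{2}$-convergence (\ref{sili}) of $\sigma_{\Lambda}^{2}$ go through), this overstatement does not damage your proof; simply replace ``bounded in $L^{2}(\Omega)$ uniformly in $\Lambda$'' by the $o(|\Lambda|^{1/2})$ bound (\ref{dce}).
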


\begin{remark}
The theorem is an extension of Theorem 2.1 of \cite{Ki-Pa:15}, where the
cases $a(\lambda )=\lambda $ and $\varphi (\lambda )=(\lambda -x_{0})^{-1}$
or $\varphi (\lambda )=\log (\lambda -x_{0}),\;x_{0}\notin \sigma (H)$ were
considered. In these cases $a,\varphi $ and $\gamma =\varphi \circ a$ are
real analytic on $\sigma (H)$ (see (\ref{spk})), hence admit real analytic
and fast decaying at infinity extensions to the whole line. Besides, $\gamma
$ is monotone on $\sigma (H)$, hence Theorem \ref{t:clt} applies.

It is worth also mentioning that conditions (\ref{afcon}) are not optimal in
general. Consider, for instance, the case where $\varphi (\lambda )= \chi
_{(-\infty,E]}(\lambda ),\;E\in \sigma (H)$, \; $a(\lambda )=\lambda $ with $%
\chi _{(-\infty,E]}$ being the indicator of $(-\infty,E] \subset \mathbb{R}$%
. Here $\gamma :=\varphi \circ a=$ $\chi _{(-\infty,E]}$ and%
\begin{equation}
\tr_{\Lambda }\varphi (a_{\Lambda }(H))=\tr_{\Lambda }\chi
_{(-\infty,E]}(H):=\mathcal{N}_{\Lambda }(E)  \label{CM}
\end{equation}%
is the number of eigenvalues of $H_{\Lambda }$ not exceeding $E$. It is
known that if the potential in $H$ is ergodic, then with probability 1%
\begin{equation*}
\lim_{|\Lambda |}|\Lambda |^{-1}\mathcal{N}_{\Lambda }(E)=N(E),
\end{equation*}%
where $N(E)$ is defined in (\ref{IDS}). This plays the role of the Law of
Large Numbers for $\mathcal{N}_{\Lambda }(E)$ \cite{Pa-Fi:92}. The Central Limit Theorem for $\mathcal{N}_{\Lambda }(E)$ is also known
\cite{Re:81}. Its proof is based on a careful analysis of a Markov chain
arising in the frameworks of the so called phase formalism, an efficient
tool of spectral analysis of the one dimensional Schrodinger operator \cite%
{Pa-Fi:92}. It can be shown that the theorem can also be proved following
the scheme of proof of Theorem \ref{t:clt}, despite that $\gamma $ is
discontinuous in this case. However, one has to use more sophisticated facts
on the Schrodinger operator with i.i.d. random potential, in particular the
bound
\begin{equation}\label{frmo}
\sup_{\varepsilon >0}\mathbf{E}\{|(H-E-i\varepsilon )_{jk}^{-1}|^{s}\}\leq
Ce^{-c|j-k|},
\end{equation}%
valid for some $s\in (0,1),\;C<\infty $ and $c>$ \cite{Ai-Wa:15} if the
probability law of potential possesses certain regularity, e.g. a bounded
density. The bound is one of the basic results of the spectral theory of the random Schrodinger operator,
implying the pure point character of the spectrum of $H$ and a number of its
other important properties. It is worth noting that the monotonicity of $%
\gamma $ on the spectrum remains true in this case. Thus, the monotonicity
of $\gamma$ seems a pertinent sufficient condition for the positivity of the
limiting variance.
\end{remark}

Here, however, is a version of the theorem, applicable to the case where $%
\gamma $ is a certain convex function on $\sigma (H)$.

\begin{theorem}
\label{t:renyi} Consider the functions $r_{\alpha }:[0,1]\rightarrow \lbrack
0,1]$ and $n_{F}:\mathbb{R}\rightarrow \lbrack 0,1]$ given by%
\begin{equation}
r_{\alpha }(\lambda )=(1-\alpha )^{-1}\log _{2}(\lambda ^{\alpha
}+(1-\lambda )^{\alpha }), \; \lambda \in [0,1], \;\alpha >0,  \label{ren}
\end{equation}%
and%
\begin{equation}
n_{F}(\lambda )=(e^{\beta (\lambda -E_{F})}+1)^{-1}, \; \lambda \in \mathbb{R%
},\;\beta >0,\;E_{F}\in \sigma (H)  \label{fer}
\end{equation}%
Assume that the \ random i.i.d. potential in (\ref{h}) -- (\ref{qq}) has
zero mean $\mathbf{E}\{V_{0}\}=0$ and that the support of its probability
law contains zero. Then the conclusions of Theorem \ref{t:clt} remain valid
for $\varphi =r_{\alpha }$ and $a=n_{F}$, i.e., the random variable $\Sigma
_{\Lambda }$ of (\ref{SL}) converges in distribution to the Gaussian random
variable of zero mean and a certain variance $\sigma ^{2}>0$.
\end{theorem}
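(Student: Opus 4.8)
The plan is to deduce Theorem \ref{t:renyi} from Theorem \ref{t:clt}: first check that the pair $(\varphi,a)=(r_\alpha,n_F)$ satisfies the hypotheses of Theorem \ref{t:clt}, and then establish the strict positivity of the limiting variance by a route that does \emph{not} rely on monotonicity of $\gamma$, which now fails. For the reduction, note that the symbol $n_F$ of (\ref{fer}) is real analytic and bounded on all of $\R$, and by (\ref{spk}) the compact set $\sigma(H)$ is mapped by $n_F$ into a compact subset $[n_-,n_+]\subset(0,1)$ bounded away from the endpoints $0,1$. The only non-smooth points of $r_\alpha$ in (\ref{ren}) are the endpoints $\lambda=0,1$, so $r_\alpha$ is real analytic on $[n_-,n_+]$ and hence $\gamma=r_\alpha\circ n_F$ is real analytic on $\sigma(H)$. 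A function real analytic on a compact interval extends to a $C^\infty$ function of compact support on $\R$ whose Fourier transform is of Schwartz class and thus satisfies (\ref{afcon}) for every $\theta$. Choosing such extensions of $n_F,r_\alpha,\gamma$, all hypotheses of Theorem \ref{t:clt} hold, so part (i) applies: $\sigma_\Lambda^2\to\sigma^2$ with the representations (\ref{si1}) and (\ref{si2}), and the convergence (\ref{dlim}) will follow once $\sigma^2>0$.

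The obstruction to invoking part (ii) directly is that, since $\beta>0$, $n_F$ is strictly decreasing while $r_\alpha$ increases on $[0,\tfrac12]$ and decreases on $[\tfrac12,1]$ with a maximum at $\tfrac12$; hence $\gamma=r_\alpha\circ n_F$ is unimodal on $\sigma(H)$ with maximum at $\lambda=E_F$ and is not monotone when $E_F$ is interior to $\sigma(H)$. I would therefore work from the martingale representation (\ref{si2}). Writing $H^{(0)}=H|_{V_0\to0}$, the quantity (\ref{aa0}) telescopes to $A_0=\gamma_{00}(H)-\gamma_{00}(H^{(0)})$, and since $\gamma_{00}(H^{(0)})$ does not depend on $V_0$ it drops out of the inner conditional variance in (\ref{si2}). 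Thus $\sigma^2=\E\{\Var_{V_0}\{g\}\}$, where $g(V_0)=\E\{\gamma_{00}(H)\mid\mathcal F_0^\infty\}$ is the average of $\gamma_{00}(H)$ over the past $\{V_j\}_{j<0}$ with $\{V_j\}_{j\ge0}$ fixed, and $\Var_{V_0}$ is over $V_0$ alone (legitimate since $V_0$ is independent of $\mathcal F_1^\infty$). Hence $\sigma^2>0$ is equivalent to the statement that, for a positive-probability set of futures $\{V_j\}_{j\ge1}$, the map $V_0\mapsto g(V_0)$ is non-constant.

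To analyse the $V_0$-dependence I would use rank-one perturbation theory: with $\partial_{V_0}H=|\delta_0\rangle\langle\delta_0|$ and $\mu_0$ the spectral measure of $H$ at site $0$, one obtains for analytic $\gamma$
\[
\partial_{V_0}\gamma_{00}(H)=\int\!\!\int\gamma^{[1]}(\lambda,\mu)\,d\mu_0(\lambda)\,d\mu_0(\mu),\qquad \partial_{V_0}^2\gamma_{00}(H)=2\!\int\!\!\int\!\!\int\gamma^{[2]}(\lambda,\mu,\nu)\,d\mu_0(\lambda)\,d\mu_0(\mu)\,d\mu_0(\nu),
\]
where $\gamma^{[1]},\gamma^{[2]}$ are the first and second divided differences of $\gamma$. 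On any spectral window where $\gamma$ is strictly concave (or strictly convex) the sign of $\gamma^{[2]}$ is definite, so $\gamma_{00}(H)$, and after averaging also $g$, is a strictly concave (convex) function of $V_0$. Since $\E\{V_0\}=0$, $0\in\supp$ of the law of $V_0$ and the disorder is nondegenerate, the law of $V_0$ charges at least two points, whence a strictly concave (convex) $g$ is non-constant and $\Var_{V_0}\{g\}>0$, yielding (\ref{sipos}).

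The main obstacle is precisely that, being unimodal, $\gamma=r_\alpha\circ n_F$ is concave near its peak $E_F$ but typically develops inflection points further out, so $\gamma^{[2]}$ need not be globally sign-definite over $\supp\mu_0$ and the clean convexity argument does not apply verbatim. I would resolve this in one of two ways. The first is to show that the averaged second divided difference $g''(V_0)=\E\{\partial_{V_0}^2\gamma_{00}(H)\mid\mathcal F_0^\infty\}$ is strictly sign-definite for positive-probability futures, exploiting the strict concavity of $r_\alpha$ at the argument $n_F(E_F)=\tfrac12$ together with the non-degeneracy of the random measure $\mu_0$ guaranteed by $\E\{V_0\}=0$ and $0\in\supp$. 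The second, more robust route is to observe that $V_0\mapsto\gamma_{00}(H)$ is real analytic (as $H$ is affine in $V_0$ and $\gamma$ is analytic on $\sigma(H)$), so $g$ is real analytic; then $\sigma^2=0$ would force every $V_0$-derivative of $g$ to vanish almost surely, and one derives a contradiction with the non-constancy of $\gamma$ through an ergodic/translation identity relating $\E\{\partial_{V_0}\gamma_{00}(H)\}$ to $\int\gamma'\,dN_H\neq0$. Controlling the past-average and the analyticity estimates uniformly in the configuration is the technical heart of the argument, the monotone shortcut $\gamma^{[1]}>0$ of Theorem \ref{t:clt}(ii) being unavailable here.
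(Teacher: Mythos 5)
Your first half is fine and coincides with the paper: $n_F$ is real analytic on the compact $K$ of (\ref{spk}), $n_F(K)=[a_-,a_+]\subset(0,1)$ stays away from the singular endpoints of $r_\alpha$, so $\gamma=r_\alpha\circ n_F$ is real analytic on $\sigma(H)$, admits extensions satisfying (\ref{afcon}), and assertion (i) of Theorem \ref{t:clt} applies verbatim. But the second half — the only part where this theorem requires a new idea — is not actually proved. Where the paper closes the argument, you stop: assuming $\sigma^2=0$, the paper uses $0\in\supp F$ to conclude from (\ref{si2})--(\ref{aa0}) that $V_0\int_0^1\E\{\gamma'_{00}(H_0(u))\,|\,\mathcal F_1^\infty\}\,du=0$ for all $V_0\in\supp F$, integrates by parts in the coupling parameter $u$, and then averages over $V_0$ using $\E\{V_0\}=0$ so that the boundary term (which is $V_0$-independent, being evaluated at $H_0(0)$) drops out; what survives is $\int_0^1(1-u)\,du\int V_0^2\,\E\bigl\{\iint\frac{\gamma'(\lambda_1)-\gamma'(\lambda_2)}{\lambda_1-\lambda_2}\,\mu_{H(u)}(d\lambda_1)\mu_{H(u)}(d\lambda_2)\,\big|\,\mathcal F_1^\infty\bigr\}F(dV_0)$, and the paper concludes it is nonzero because the first divided difference of $\gamma'$ is asserted to have a definite sign on the spectrum ($\gamma$ strictly convex/concave there). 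Your ``route (a)'' is precisely this missing sign-definiteness step, stated as a hope rather than proved; your ``route (b)'' rests on the claim $\int\gamma'\,dN_H\neq 0$, which is unjustified and in general false for a unimodal $\gamma$ whose derivative changes sign inside $\sigma(H)$ (there is no reason the positive and negative contributions of $\gamma'$ against $N_H$ should not cancel). So neither route, as written, establishes (\ref{sipos}), and the theorem's conclusion is left unproven. Your observation that for large $\beta$ the composed symbol develops inflection points inside $K$ is a fair caveat about the scope of the convexity claim, but noting an obstacle is not the same as overcoming it, and you supply no complete substitute argument.

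There is also a technical slip in your setup: the quantity (\ref{aa0}) does \emph{not} telescope to $\gamma_{00}(H)-\gamma_{00}(H^{(0)})$. For a single diagonal entry one has $\partial_u\,\gamma_{00}(H(u))\neq V_0\,(\gamma'(H(u)))_{00}$; the Daletskii--Krein formula gives instead the double divided-difference integral you wrote later. The identity $\partial_u\,\tr\gamma(\cdot)=V_0(\gamma'(\cdot))_{00}$ holds only under the trace (see (\ref{tvt0}) in Lemma \ref{l:clt1}), so $A_0$ is the $\Lambda\to\infty$ limit of a \emph{trace} difference $\tr[\gamma(H)-\gamma(H|_{V_0=0})]$ (a rank-one, trace-class perturbation), not an entry difference. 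This is reparable — your subsequent perturbation formulas then get replaced by the paper's $\partial_u\gamma'_{00}(H(u))=V_0\iint\frac{\gamma'(\lambda_1)-\gamma'(\lambda_2)}{\lambda_1-\lambda_2}\mu_{H(u)}(d\lambda_1)\mu_{H(u)}(d\lambda_2)$, so the framework you build is essentially the paper's — but the essential gap remains the unestablished sign of the averaged second-order term.
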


\begin{remark}
\label{r:renyi} The quantity $\tr_{\Lambda }r_{\alpha }((n_{F}(H))_{\Lambda
})$ is known in quantum statistical mechanics and quantum information theory
as the R\'{e}nyi entanglement entropy of free fermions in the thermal state
of the inverse temperature $\beta ^{-1}>0$ and the Fermi energy $E_{F}$ and having $H$ as
the one body Hamiltonian, see, e,g. \cite{Ab-St:15,Ar-Co:14,Ei-Co:11}. An
important particular case where $\alpha =1$, hence $h_{1}(\lambda )=-\lambda
\log _{2}\lambda - (1-\lambda )\log _{2}(1-\lambda ), \;\lambda \in [0,1]$,
is known as the von Neumann entanglement entropy. One is interested in the
large-$|\Lambda |$ asympotic form of the entanglement entropy. In the
translation invariant case, i.e., for the case of constant potential in (\ref%
{h}) -- (\ref{qq}) one can use the Szeg\"o theorem (see (\ref{sf2}) and (\ref%
{szst1})) to find a two-term asymptotic formula for the entanglement
entropy. In this case the term proportional to $|\Lambda |$ in (\ref{sf2})
and (\ref{szst1}), i.e., to the one dimensional analog of the volume of the
spatial domain occupied by the system, is known as the volume law, while the
second term in (\ref{sf2}), which is independent of $|\Lambda |$, i.e.,
proportional to the one dimensional analog $\{-M,M\}$ of the surface area of
the domain, is known as the area law \cite{Ei-Co:11}. In view of the above
theorem we conclude that in the disorder case (random potential in $H$) the
leading term of the entanglement entropy is non-random and is again the
volume law while the subleading term is random, proportional to $|\Lambda
|^{1/2}$ and describes random fluctuations of the volume law. The $O(1)$ in $%
|\Lambda| $ term can also be found for some $\varphi $ and $a$ \cite%
{Ki-Pa:15}. It is random and is now the "subsubleading" term of the
asymptotic formula. Of particular interest is the zero-temperature case $%
\beta=\infty$, where $n_F=\chi_{-\infty,E}$ and this term is leading. We
refer the reader to recent works \cite%
{Ab-St:15,El-Co:17,Le-Co:13,Pa-Sl:14,Pa-Sl:18,Pi-So:18,So:13} for related
results and references.
\end{remark}


The above results can be viewed as stochastic analogs of the Szeg\"{o}
theorem (see more on the analogy in \cite{Ki-Pa:15} and below). It is
essentially a Central Limit Theorem in its traditional form, i.e., an
assertion on the convergence of distribution of an appropriately normalized
sums of random variables to the Gaussian random variable. In recent decades
there has been a considerable interest to the almost sure versions of
classical (distributional) limit theorems. The prototype of such theorems
dates back to P.Levy and P.Erdos and is as follows, see e.g. \cite{Be:98,De:13} for reviews.

Let $\{X_{l}\}_{l=1}^{\infty }$ be a sequence of i.i.d. random variables of
zero mean and unit variance. Denote $S_{m}=\sum_{l=1}^{m}$, $%
Z_{m}=m^{-1/2}S_{m}$. Then we have with probability 1
\begin{equation}
\frac{1}{\log M}\sum_{m=1}^{M}\frac{1}{m}\mathbf{1}_{\Delta }(Z_{m})=\Phi
(\Delta )+o(1), \; M \to \infty,  \label{asclt}
\end{equation}%
In other words, the random ("empirical") distribution of $Z_{m}$ converges
with probability 1 to the (non-random) Gaussian distribution.

On the other hand, the classical Central Limit Theorem implies%
\begin{equation}
\frac{1}{\log M}\sum_{m=1}^{M}\frac{1}{m}\mathbf{E}\{\mathbf{1}_{\Delta
}(Z_{m})\}=\Phi (\Delta )+o(1), \; M \to \infty,  \label{clt}
\end{equation}%
i.e., just the convergence of expectations of the random distributions on
the l.h.s. of (\ref{asclt}). Thus, replacing the expectation by the
logarithmic average, a sequence of random variables satisfying the CLT can
be observed along all its typical realizations.

The situation with the almost sure CLT (\ref{asclt}) for independent random
variables is rather well understood, see e.g. \cite{Be:98,De:13} and
references therein, while the case of dependent random variable is more
involved and diverse, see e.g. \cite{Ch-Go:07,Ib-Li:00,La-Ph:90,Pe-Sh:95}.
As in the case of classical CLT (\ref{clt}), the existing results concern
mostly the weakly dependent stationary sequences, e.g. strongly mixing
sequences. This and the approximation techniques developed \cite{Ib-Li:71}),
Section 18,3 allow us to prove an almost sure version of Theorem \ref{t:clt}.

\begin{theorem}
\label{t:asclt} We have with probability 1 under the conditions of Theorem %
\ref{t:clt}%
\begin{equation}  \label{gasclt}
\frac{1}{\log M}\sum_{m=1}^{M}\frac{1}{m}\mathbf{1}_{\Delta }(\sigma
^{-1}\Sigma _{\lbrack -m,m]})=\Phi (\Delta )+o(1), \; M \to \infty,
\end{equation}%
where, $\Sigma _{\lbrack -m,m]}$ is given by (\ref{SL}) with $\Lambda
=[-m,m] $, $\Delta \subset \mathbb{R}$ is an interval and $\Phi $ is the
standard Gaussian law.
\end{theorem}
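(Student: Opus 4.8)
The plan is to follow the standard route for an almost sure CLT of a weakly dependent sequence: reduce \eqref{gasclt} to the almost sure vanishing of a logarithmically weighted \emph{centered} sum, establish a two-scale covariance bound for the indicators, and then upgrade an $L^2$-estimate to the almost sure statement along a rapidly growing subsequence with a gap-filling argument. Set $Z_m=\sigma^{-1}\Sigma_{[-m,m]}$ and $g_m=\mathbf{1}_\Delta(Z_m)-\mathbf{E}\{\mathbf{1}_\Delta(Z_m)\}$, so that $|g_m|\le 1$. By Theorem \ref{t:clt}(ii) we have $\mathbf{E}\{\mathbf{1}_\Delta(Z_m)\}=\Phi(\Delta)+o(1)$, and since the weights $\{1/m\}$ define a Toeplitz summation method with $\sum_{m\le M}1/m\sim\log M$, the Toeplitz lemma gives $(\log M)^{-1}\sum_{m\le M}m^{-1}\mathbf{E}\{\mathbf{1}_\Delta(Z_m)\}\to\Phi(\Delta)$. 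Hence \eqref{gasclt} is equivalent to $T_M:=(\log M)^{-1}\sum_{m=1}^M m^{-1}g_m\to 0$ almost surely.

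The heart of the matter is the covariance bound
\[
\bigl|\mathbf{E}\{g_m g_n\}\bigr|=\bigl|\mathrm{Cov}\bigl(\mathbf{1}_\Delta(Z_m),\mathbf{1}_\Delta(Z_n)\bigr)\bigr|\le C\,(m/n)^{\tau},\qquad 1\le m\le n,
\]
for some $\tau\in(0,1/2]$. To prove it I would reuse the representation underlying the proof of Theorem \ref{t:clt}: modulo an $L^2$-error that is $o(1)$ uniformly in $m$, one has $\Sigma_{[-m,m]}=(2m+1)^{-1/2}\sum_{|j|\le m}\overset{\circ}{\gamma}_{jj}(H)$ with $\overset{\circ}{\gamma}_{jj}(H)$ as in \eqref{gac}, and the stationary sequence $\{\overset{\circ}{\gamma}_{jj}(H)\}_j$ decays in dependence because, by locality of $H$ and the decay of the matrix elements of $\gamma(H)$ forced by \eqref{afcon}, $\gamma_{jj}(H)$ is only weakly sensitive to potential values far from $j$. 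Splitting $\sum_{|j|\le n}=\sum_{|j|\le m}+\sum_{m<|j|\le n}$ yields
\[
Z_n=\sqrt{\tfrac{2m+1}{2n+1}}\;Z_m+R_{m,n},\qquad R_{m,n}=\sigma^{-1}(2n+1)^{-1/2}\!\!\sum_{m<|j|\le n}\overset{\circ}{\gamma}_{jj}(H),
\]
where, after discarding a boundary layer of width equal to the correlation length around $\pm m$ (of negligible relative size), $R_{m,n}$ is asymptotically independent of $Z_m$. Consequently the correlation coefficient of $Z_m$ and $Z_n$ is $O(\sqrt{m/n})$; since the limit law is the continuous Gaussian and $Z_m$ has a uniformly bounded concentration function, the covariance of the indicators is controlled by this correlation coefficient, giving the displayed bound.

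Granting the covariance bound, the $L^2$-estimate is routine: using $\sum_{m\le n}m^{\tau-1}=O(n^{\tau})$,
\[
\mathbf{E}\{T_M^2\}=\frac{1}{(\log M)^2}\sum_{m,n\le M}\frac{\mathbf{E}\{g_m g_n\}}{mn}\le\frac{C}{(\log M)^2}\Bigl(\sum_{m\le M}\frac{1}{m^2}+2\sum_{n\le M}\frac1n\sum_{m\le n}\frac{(m/n)^{\tau}}{m}\Bigr)=O\!\left(\frac{1}{\log M}\right).
\]
Choosing a subsequence $M_k$ with $\log M_k=k^2$ gives $\sum_k\mathbf{E}\{T_{M_k}^2\}\le C\sum_k k^{-2}<\infty$, so $T_{M_k}\to 0$ almost surely by Borel--Cantelli, while $\log M_{k+1}/\log M_k\to1$. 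For $M_k\le M<M_{k+1}$ one has $|T_M|\le|T_{M_k}|+(\log M_k)^{-1}\sum_{M_k<m\le M_{k+1}}m^{-1}$, and the last term is $(\log M_{k+1}-\log M_k)/\log M_k+o(1)\to0$; hence $T_M\to0$ almost surely along the full sequence, which is \eqref{gasclt}.

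The main obstacle is clearly the covariance bound of the second paragraph, where all the spectral and probabilistic content resides: quantifying the decay of correlations of $\{\gamma_{jj}(H)\}$ under \eqref{afcon}, controlling the boundary-layer cross-correlations between the two scales, and passing from the correlation coefficient to the indicator covariance through an anti-concentration (bounded-density) estimate for $Z_m$. Alternatively, once the mixing rate of $\{\overset{\circ}{\gamma}_{jj}(H)\}$ is established and combined with the approximation techniques of \cite{Ib-Li:71}, one may invoke a general almost sure CLT for strongly mixing sequences as in \cite{La-Ph:90,Pe-Sh:95}; the self-contained route above is the more transparent of the two and isolates exactly the estimate that must be supplied.
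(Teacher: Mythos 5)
Your overall architecture (Toeplitz lemma for the expectations, a second-moment bound on the logarithmically weighted centered sum, Borel--Cantelli along $\log M_k=k^2$, gap filling) is a legitimate self-contained skeleton, and the final $L^2$-to-almost-sure step is correct as written. But there is a genuine gap exactly where you yourself locate ``all the spectral and probabilistic content'': the covariance bound $|\mathrm{Cov}(\mathbf{1}_\Delta(Z_m),\mathbf{1}_\Delta(Z_n))|\le C(m/n)^\tau$ for \emph{indicators}. Two independent obstacles stand in the way. First, passing from a correlation-coefficient bound of order $\sqrt{m/n}$ for the pair $(Z_m,Z_n)$ to a covariance bound for their indicators requires an anti-concentration estimate (a uniformly bounded concentration function) for $Z_m$ holding uniformly in $m$; the distributional convergence of Theorem \ref{t:clt} provides no such quantitative control at finite $m$, and nothing in the hypotheses (\ref{afcon}) supplies it --- the paper deliberately avoids regularity assumptions on the law of the potential (those would be needed, e.g., for (\ref{frmo})), so the law of $Z_m$ at finite $m$ may a priori even have atoms. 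You flag this as ``the estimate that must be supplied,'' but it is the heart of the proof rather than a technical afterthought, and it is unclear it can be supplied at this level of generality. Second, your decoupling of $R_{m,n}$ from $Z_m$ by ``discarding a boundary layer of width equal to the correlation length'' presupposes a finite correlation length; the stationary sequence $\{\overset{\circ }{\gamma }_{jj}(H)\}$ is not finitely dependent, and under (\ref{afcon}) its sensitivity to distant potential values decays only polynomially, $O(s^{-\theta})$ with $\theta>1$ (Lemma \ref{l:asin}). Making your asymptotic-independence claim quantitative requires exactly the conditional-expectation approximants $\xi_j^{(s)}=\mathbf{E}\{Y_j|\mathcal{F}_{j-s}^{j+s}\}$ of (\ref{xeg}), which are $2s$-dependent by (\ref{axi}), together with the error estimates of Lemma \ref{l:sims} --- i.e., the machinery you set aside. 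A related smaller issue: replacing $\Sigma_{[-m,m]}$ by the normalized $\gamma$-sum produces a deterministic $o(1)$ shift of $Z_m$, which alters $\mathbf{1}_\Delta(Z_m)$ on the event that $Z_m$ is near $\partial\Delta$; controlling this again needs either anti-concentration or a sandwich with $\Delta^{\pm\delta}$ and continuity of $\Phi$.

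The paper's proof is organized precisely to bypass both obstacles. It invokes the Peligrad--Shao criterion (Proposition \ref{p:pesh1}), in which the variance bound (\ref{varf}) need only be verified for bounded \emph{Lipschitz} test functions $f$ --- the passage from Lipschitz functions to indicators is absorbed into the criterion itself, using only the distributional CLT of Theorem \ref{t:clt}, so no anti-concentration for $Z_m$ is ever needed. The bound (\ref{varf}) is then proved by the two-scale split $\mathbf{Var}\{F_{M}\}\leq 2\mathbf{Var}\{F_{M}^{(s)}\}+2\mathbf{Var}\{F_{M}-F_{M}^{(s)}\}$: the strongly mixing term is $O(\log s/\log M)$ by Lemma \ref{l:vxi} (following Lemma 1 of \cite{Pe-Sh:95}), the approximation term is $O(s^{-(\theta-1)})$ by Lemma \ref{l:sims}(iii) via Lemma \ref{l:asin}, and the choice $s=(\log M)^{1-\varepsilon}$ closes the estimate. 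This is the alternative you mention and dismiss in your last sentence as less transparent; under the hypotheses of Theorem \ref{t:clt} it appears to be the only workable one of the two routes, because with Lipschitz $f$ all the needed covariance estimates reduce to $L^{1}$-approximation bounds of the form $\mathbf{E}\{|\eta_0^{(s)}|\}=O(s^{-\theta})$, with no regularity of the laws of the $Z_m$ required.
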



\begin{remark}
Given a sequence $\{\xi _{m}\}_{m\geq 1}$ of random variables and a random
variable $\xi $, write%
\begin{equation}
\xi _{M}\overset{\mathcal{D}}{=}M^{1/2}\xi +o(M^{1/2}),\;M\rightarrow \infty
\label{cd1}
\end{equation}%
if we have%
\begin{equation}
\mathbf{P}\{\xi _{M}/M^{1/2}\in \Delta \}=G(\Delta ) + o(1),\;M\rightarrow
\infty,  \label{cd2}
\end{equation}%
where $G$ is the probability law of $\xi $, 
and write
\begin{equation}
\xi_{M}\overset{\mathcal{L}}{= }M^{1/2}\xi+o(M^{1/2}),\;M\rightarrow \infty
\label{cl1}
\end{equation}%
if we have with probability 1 (assuming that all $\{\xi_m, \;m\geq 1$ ) are
defined on the same probability space)%
\begin{equation}
\frac{1}{\log M}\sum_{m=1}^{M}I_\Delta(\xi_{m}/m^{1/2})=G(\Delta) +
o(1),\;M\rightarrow \infty.  \label{cl2}
\end{equation}%
Then, we can formulate Theorems \ref{t:clt} and (\ref{t:asclt}) in the form
similar to that of the Szeg\"{o} theorem (cf. (\ref{sf2})), namely as
\begin{align}
\tr_{\Lambda }\varphi (a_{\Lambda }(H))& \overset{\mathcal{D}}{=}|\Lambda
|\int_{\sigma (H)}\gamma (\lambda )N_{H}(d\lambda )  \label{szst1} \\
& +|\Lambda |^{1/2}\sigma ^{-1}\xi +o(|\Lambda |^{1/2}),\;|\Lambda
|=(2M+1)\rightarrow \infty  \notag
\end{align}%
for Theorems \ref{t:clt} and with probability 1 as%
\begin{align}
\tr_{\Lambda }\varphi (a_{\Lambda }(H))& \overset{\mathcal{L}}{=}|\Lambda
|\int_{\sigma (H)}\gamma (\lambda )N_{H}(d\lambda )  \label{szst2} \\
& +|\Lambda |^{1/2} \Phi(\sigma ^{-1}\Delta )+o(|\Lambda |^{1/2}),\;|\Lambda
|=(2M+1)\rightarrow \infty  \notag
\end{align}%
for Theorems \ref{t:clt}, i.e., as two-term "Szeg\"{o}-like" asymptotic
formulas valid in the sense of the $\mathcal{D}$- and the $\mathcal{L}$%
-convergence, the latter valid with probability 1. An apparent difference
between the Szeg\"{o} formula (\ref{sf2}) and its stochastic counterparts (%
\ref{szst1}) and (\ref{szst2}) is that the subleading term of the Szeg\"o
theorem is independent of $|\Lambda | $ while the subleading term of its
stochastic counterparts grows as $|\Lambda |^{1/2}$ although with stochastic
oscillations (see below).
\end{remark}

We will comment now on the errors bounds in the above asymptotic formulas.
We will mostly use known results on the rates of convergence for the both
CLT (\ref{cd2}) and (\ref{cl2}) with $\xi_m$ being the sum of i.i.d. random
variable (see (\ref{clt}) and (\ref{asclt})), despite that in our (spectral)
context the terms of the sum in (\ref{trfl}) are always dependent even if
the "output" potential is a collection of i.i.d. random variables. It seems
plausible that the error bounds for the i.i.d. case provide best possible
but not too overestimated versions of the error bounds for the case of
sufficiently weakly dependent terms. Known results on the sums of weakly
dependent random variables support this approach, see e.g. \cite%
{Be:98,De:13,Ch-Go:07,Ib-Li:00,La-Ph:90,Pe-Sh:95}.

Recall first that for the classical Szeg\"o (non-random) case (\ref{sf2}),
i.e., for the Toeplitz and convolution operators, the subleading term is $%
\Lambda $-independent and the error is just $o(1)$ in general. However, if $%
\varphi $ and $a$ are infinitely differentiable, one can construct the whole
asymptotic series in the powers of $|\Lambda |^{-1}$ \cite{Wi:85}.

On the other hand, it follows from the standard CLT for bounded i.i.d.
random variables (see (\ref{cd2})) and the Berry-Esseen bound that we have in
(\ref{cd2}) the error term $O(M^{-1/2})$ instead of $o(1)$,
and, hopefully, $O(|\Lambda^{-1/2}|)$ in the $\mathcal{D}$-convergence
stochastic analog (\ref{dlim}) of the Szeg\"o theorem.

As for the "point-wise" case treated in Theorem \ref{t:asclt}, we note first
that this is a "frequency"-type result, analogous to the Law of Large
Numbers or, more generally, to the ergodic theorem. This is clear from the
following observation on the well known Gaussian random processes \cite%
{Be:98}. Namely, let $W:[0,\infty )\rightarrow \mathbb{R}$ be the Wiener
process and $U:\mathbb{R}\rightarrow \mathbb{R}$ be the Uhlenbeck-Ornstein
process. They are related as $U(s)=e^{-s/2}W(e^{s}),\;s\in \mathbb{R}$, thus
\begin{equation*}
\frac{1}{\log M}\int_{1}^{M}\mathbf{1}_{\Delta }(W(t)/t^{1/2})dt=\frac{1}{%
\log M}\int_{0}^{\log M}\mathbf{1}_{\Delta }(U(s))ds.
\end{equation*}%
Since $U$ is ergodic and its one-point (invariant) distribution is the
standard Gaussian, the r.h.s. converges with probability 1 to $\Phi (\Delta
) $ as $T\rightarrow \infty $ according to the ergodic theorem. We obtained
the almost sure Central Limit Theorem for the Wiener process, the continuous
time analog of the sequence of i.i.d. Gaussian random variables, see (\ref%
{asclt}).

In view of this observation (explaining, in particular, the appearance of
the logarithmic average in the almost sure Central Limit Theorem) and the
Law of Iterated Logarithm we have to have with probability 1 in (\ref{cd2})
the oscillating error term $O((\log\log\log M/\log M)^{1/2})$ instead of $%
o(1)$, hence the error term $O((\log\log\log |\Lambda|/\log \Lambda)^{1/2})$
in the $\mathcal{L}$-convergence stochastic analog (\ref{gasclt}) of the
Szeg\"o theorem. More precisely, it follows from the invariance principle
that with probability 1 we have to have the additional terms $\widetilde{%
\sigma }W(\log M)+O(\log M^{1/2-\varepsilon }), \; M\rightarrow \infty$ in (%
\ref{cl2}) and, correspondingly, the terms
\begin{equation*}
\widetilde{\sigma }W(\log |\Lambda |)+O(|\log |\Lambda \}|^{1/2-\varepsilon
}),\;\log |\Lambda |\rightarrow \infty,
\end{equation*}
with $\widetilde{\sigma }>0$ and some $\varepsilon >0$ in (\ref{gasclt}).

\medskip
\medskip
We prove in this paper asymptotic formulas
for traces of certain random operators related to the restrictions
to the expanding intervals $\Lambda =[-M,M]\subset \mathbb{Z}%
,\;M\rightarrow \infty $ of the one dimensional discrete Schrodinger
operator $H$ assuming that its potential is a collection of random i.i.d.
variables. We do not use, however, a remarkable property of $H$, the pure
point character of its spectrum. This spectral type holds for any
bounded i.i.d. potential  \cite{Ai-Wa:15} and can be contrasted
with the absolute continuous type of the spectrum of $H$ with constant or
periodic potential. Moreover, if the common probability law of the on-site
potential is Lipschitzian, we have the bound (\ref{frmo}).
It can be shown that the use of the bound
makes the
conditions of our results somewhat weaker (it suffices to have $\theta =1$
in (\ref{afcon}), certain bounds  somewhat stronger ($O(1)$ instead $%
o(|\Lambda |^{1/2})$ in (\ref{dce}), $Ce^{-cp}$ instead $C/p^{\theta }$ in (%
\ref{fadec}), etc.) and proofs simpler (Lemmas \ref{l:ctu} and (\ref{l:fdec}%
) are not necessary). On the other hand, the bound (\ref{frmo}) holds only
under the condition of some regularity of the common probability law of
the i.i.d. potential (e.g., the Lipschitz continuity of its probability law). This is why
we prefer to use rather standard  spectral tools, somewhat less optimal
conditions (\ref{afcon}) on $a$ and $\varphi $ and somewhat more involved proofs but to have corresponding results valid for
a larger class of random i.i.d. potentials of
Theorems \ref{t:clt} and
\ref{t:asclt}.

It is worth noting, however, that the bound (\ref{frmo}) is an important necessary tool in the analysis of the large-$\Lambda$ behavior of $\mathrm{Tr}_{\Lambda} \varphi(a_\Lambda (H))$ with not too smooth $a$ and  $\varphi$, e.g. $a=n_F|_{\beta=\infty}=\chi_{[E_F,\infty)}$ with $n_F$ of (\ref{fer}) and $\varphi=r_\alpha, \; \alpha \leq 1$ with  $r_\alpha$ of (\ref{ren}) corresponding to the entanglement entropy of the ground state of free
disordered  fermions at zero temperature, see  \cite{El-Co:17,Pa-Sl:18} and references therein.


\section{Proof of Results}

\textbf{Proof of Theorem \ref{t:clt}} . It follows from (\ref{afcon}) and
Lemma \ref{l:fdif} that we have uniformly in potential
\begin{equation}
\tr_{\Lambda }\varphi (a_{\Lambda }(H))=\tr_{\Lambda }\varphi
(a(H))+o(|\Lambda |^{1/2}),\;|\Lambda |\rightarrow \infty .  \label{dce}
\end{equation}%
Hence, we obtain in view of (\ref{IDS}) and the definition (\ref{trl}) of $%
\tr_{\Lambda }$
\begin{align}
\tr_{\Lambda }\varphi (a_{\Lambda }(H)=|\Lambda |\int_{-\infty }^{\infty
}\gamma (\lambda )N(d\lambda )+\overset{\circ }{\gamma }_{\Lambda
}+o(|\Lambda |^{1/2}),  \notag
\end{align}%
where
\begin{align}
&\overset{\circ }{\gamma }_{\Lambda } :=\gamma _{\Lambda }-\mathbf{E}%
\{\gamma _{\Lambda }\},\;\; \gamma _{\Lambda }=\sum_{j\in \Lambda }(\gamma
_{jj}(H)-\mathbf{E}\{\gamma _{jj}(H)\}),  \label{GL} \\
&\mathbf{E}\{\gamma _{\Lambda }(H)\}) =|\Lambda |\int_{-\infty }^{\infty
}\gamma (\lambda )N_{H}(d\lambda )  \notag
\end{align}%
The above formulas reduce the proof of the theorem to that of the Central
Limit Theorem for $|\Lambda |^{-1/2}\gamma _{\Lambda }$, i.e., for the
sequence $\{\gamma _{jj}(H)\}_{j\in \mathbb{Z}}$. The sequence is ergodic
according to (\ref{eom}) for $j=k$. %
%

We  use in this case a general Central Limit Theorem for stationary
weakly dependent sequences given by Proposition \ref{p:ibli} with $%
X_{j}=\;V_{j},\;j\in \mathbb{Z}$ and $Y_{0}=\gamma _{00}(H)$. To verify the
approximation condition (\ref{dc}) of the proposition it is convenient to
write $V=(V_{<},V_{>})$, where $V_{<}=\{V_{j}\}_{|j|\leq p}\;$and $%
V_{>}=\{V_{j}\}_{|j|>p}$ are independent collections of independent random
variables whose probability laws we denote $P_{<}$ and $P_{>}$ so that the
probability law $P$ of $V$ is symbolically $P=P_{<}\cdot P_{>}$. Denoting $%
\gamma _{00}(H)=g(V_{<},V_{>})$, we have
\begin{align*}
& \mathbf{E}\{|\gamma _{00}(H)-\mathbf{E}\{\gamma _{00}(H)|\mathcal{F}%
_{-p}^{p}\}|\} \\
& =\int \Big|g(V_{<},V_{>})-\int g(V_{<},V_{>}^{\prime })P(dV_{>}^{\prime })%
\Big|P(dV_{>})P(dV_{<}) \\
& \leq \int \left( \int \Big|g(V_{<},V_{>})-g(V_{<},V_{>}^{\prime })\Big|%
P(dV_{>}^{\prime })\right) P(dV_{>})P(dV_{<}).
\end{align*}%
Applying to the difference in the third line of the above formula Lemma \ref%
{l:asin} with $f=\gamma_{00}$, we find that the expression in
the first line of the formula is bounded by $C/p^{\theta}$. Thus, the series (\ref%
{dc}) is convergent in our case.

This and Proposition (\ref{p:ibli}) imply the validity of (\ref{sili}) -- (%
\ref{gac}). The formula for the \ limiting variance (\ref{si2}) -- (\ref{aa0})  is proved in Lemma \ref{l:clt1}.


Let us prove now the positivity of the limiting variance $\sigma ^{2}$ (\ref%
{sipos}). According to (\ref{si2}) -- (\ref{aa0}), the hypothesis $\sigma
^{2}=0$ implies that for an almost every event from $\mathcal{F}_{1}^{\infty
}$ the expression%
\begin{equation}
V_{0}\int_{0}^{1}ds\ \mathbf{E\{}\gamma _{00}^{\prime }(H|_{V_{0}\rightarrow
uV_{0}})|\mathcal{F}_{1}^{\infty }\}du  \label{hyp}
\end{equation}%
is independent of $V_{0} \in \mathrm{supp} F$. Assume without loss of
generality that zero is in support of $F$. Then the above expression is
zero. On the other hand, if our i.i.d. random potential is non-trivial, then
there exists a non-zero point $V_{0}\neq 0$ in the support. If,
in addition, $\gamma^{\prime }$ does not change the sign on the spectrum of $%
H$ and is not zero, then (\ref{hyp}) cannot be zero, and we have a contradiction.

Now it suffices to use a general argument (see e.g. Theorem 18.6.1 of \cite%
{Ib-Li:71} or Proposition 3.2.9 of \cite{Pa-Sh:11}) to finish the proof of
Theorem \ref{t:clt}.$\blacksquare $


\bigskip \textbf{Proof of Theorem} \ref{t:renyi}. We will first use Theorem %
\ref{t:clt}. Indeed, according to (\ref{spk}) and (\ref{fer}), $a=n_{F}$ is
real analytic on the finite interval $K$ of (\ref{spk}) and admits a real analytic
and fast decaying at infinity extension to the whole axis. Besides, $%
a(K)=[a_{-},a_{+}],\;0<a_{-}<a_{+}<1$ is also finite, hence, $\varphi =r_{\alpha }$ of (\ref%
{ren}) is real analytic on $a(K)$ and admits a real analytic and fast
decaying at infinity extension to the whole axis. Thus, assertion (i) of
Theorem \ref{t:clt} is valid in this case.

We cannot, however, use assertion (ii) of Theorem \ref{t:clt}, since $\gamma
=r_{\alpha }\circ n_{F} $ is not monotone but convex on $K$. Here is another
argument proving the positivity (\ref{sipos}) of the limiting variance (\ref%
{si2}) -- (\ref{aa0}).

Assuming that the variance is zero and using the fact that zero is in
support of the probability law $F$ of the potential, we obtain from (\ref%
{si2}) -- (\ref{aa0}), as in the proof of Theorem \ref{t:clt}, that for
almost every event from $\mathcal{F}_{1}^{\infty }$ we have
\begin{equation*}
V_{0}\int_{0}^{1}\mathbf{E}\left\{\gamma _{00}^{\prime }(H_{0}(u))|\mathcal{F%
}_{1}^{\infty }\right\}du=0,\;V_{0}\in \mathrm{supp}F,
\end{equation*}%
where \ $H_{0}(u):=H|_{V_{0}\rightarrow uV_{0}}$. Integrating here by parts
with respect to $u$, we get%
\begin{align*}
&V_{0}\mathbf{E}\{\gamma _{00}^{\prime }(H_{0}(0)) |\mathcal{F}_{1}^{\infty
}\}+V_{0}\int_{0}^{1}
 \mathbf{E}\Big\{\frac{\partial }{\partial u}\gamma _{00}^{\prime
}(H_{0}(u))|\mathcal{F}_{1}^{\infty }\Big\}(1-u)du=0, \; \;V_{0}\in \mathrm{supp}F.
\end{align*}%
and since $\mathbf{E}\{V_{0}\}=0$ and $\gamma _{00}^{\prime }(H_{0}(0))$ is
independent of $V_{0}$, the expectation with respect to $V_{0}$ yields for
almost every event from $\mathcal{F}_{1}^{\infty }$

\begin{equation}
\int_{0}^{1}(1-u)du\int V_{0}\mathbf{E}\{\frac{\partial }{\partial u}\gamma
_{00}^{\prime }(H(u))|\mathcal{F}_{1}^{\infty }\}F(dV_{0})=0.  \label{intp}
\end{equation}%
We will now use the formula%
\begin{equation*}
\frac{\partial }{\partial u}\gamma _{00}^{\prime }(H(u)=V_{0}\int \int \frac{%
\gamma^{\prime } (\lambda _{1})-\gamma ^{\prime }(\lambda _{2})}{\lambda
_{1}-\lambda _{2}}\mu _{_{H(u)}}(d\lambda _{1})\mu _{_{H(u)}}(d\lambda _{2}),
\end{equation*}%
where $\mu _{_{H(u)}}(d\lambda )=(\mathcal{E}_{H(u)}(d\lambda ))_{00},\;$
and $\mathcal{E}_{H(u)}$ is the resolution of identity of $H(u)$.
Thus, $\mu _{0}\geq 0$ and $\mu _{0}(\mathbb{R})=1$. The formula can be
obtained by iterating twice the Duhamel formula (\ref{duh}).

Plugging the r.h.s. of the formula in (\ref{intp}) and recalling that $%
\gamma $ is strictly convex on the spectrum, hence $(\gamma (\lambda
_{1})-\gamma (\lambda _{2}))(\lambda _{1}-\lambda )^{-1}<0$, we conclude
that the r.h.s. of (\ref{intp}) is not zero. This implies the positivity of
the variance.$\blacksquare $

\bigskip \textbf{Proof of Theorem \ref{t:asclt}} . As in the proof of Theorem %
\ref{t:clt} we will start with passing from $\tr_{\Lambda }\varphi
(a_{\Lambda }(H))$ to $\tr_{\Lambda } \varphi (a(H))=\tr_{\Lambda } \gamma
(H)$ with the error $o(|\Lambda |^{1/2})$ by using (\ref{afcon}) and Lemma %
\ref{l:fdif} (see (\ref{dce})), thereby reducing the proof of the theorem to
the proof of the almost sure CLT for $|\Lambda |^{-1/2}\gamma _{\Lambda }$
(see \ref{GL}) i.e., for the same ergodic sequence $\{\gamma
_{jj}(H)\}_{j\in \mathbb{Z}}$ as in Theorem \ref{t:clt}.

Our further proof is essentially based on that in \cite{Pe-Sh:95} of the
almost sure CLT for ergodic strongly mixing sequences (see(\ref{alk})) and
on the procedure of approximation of general ergodic sequences by strongly
mixing sequences (see \ref{dc})) given in \cite{Ib-Li:71}, Section 18.3.
In particular, according to Proposition \ref{p:pesh1} (see Theorem 1 in \cite%
{Pe-Sh:95}), it suffices to prove the bound
\begin{equation}
\mathbf{Var}\left\{ \frac{1}{\log M}\sum_{m=1}^{M}\frac{1}{m}f\left(
Z_{m}\right) \right\} =O(1/(\log M)^{\varepsilon }),\;M\rightarrow \infty \;
\label{varf}
\end{equation}%
for any bounded Lipschitzian $f$ \ (see (\ref{lip})),
\begin{equation}
Z_{m}=\mu _{m}^{-1/2}\Sigma _{[ -m,m]},\;\mu _{m}=2m+1  \label{zm}
\end{equation}%
and some $\varepsilon >0$.

To this end we denote
\begin{equation}  \label{Y}
\overset{\circ }{\gamma }_{jj}(H):=\gamma _{jj}(H)-\mathbf{E}\{\gamma
_{jj}(H)\}=Y_{j},\;j\in \mathbb{Z}
\end{equation}
and introduce for every positive integer $s$ the ergodic sequences $\{\xi
_{j}^{(s)}\}_{j\in \mathbb{Z}}$ and $\{\eta _{j}^{(s)}\}_{j\in \mathbb{Z}}$
with%
\begin{equation}
\xi _{j}^{(s)}=\mathbf{E}\{Y_{j}|\mathcal{F}_{j-s}^{j+s}\},\;\;\eta
_{j}^{(s)}=Y_{j}-\xi _{j}^{(s)}.  \label{xeg}
\end{equation}%
Denote also%
\begin{equation*}
F_{M}=\frac{1}{\log M}\sum_{m=1}^{M}\frac{1}{m}f\left( Z_{m}\right)
\end{equation*}%
and%
\begin{equation}
F_{M}^{(s)}=\frac{1}{\log M}\sum_{m=1}^{M}\frac{1}{m}f\left(
Z_{m}^{(s)}\right) ,\;Z_{m}^{(s)}=Z_{m}|_{Y_{j}\rightarrow \xi _{j}^{(s)}}.
\label{fms}
\end{equation}%
We have then from the elementary inequality $\mathbf{Var}\{\xi\}\leq 2\mathbf{%
Var}\{\eta\}+2\mathbf{Var}\{\xi-\eta\}$ and (\ref{lip}):%
\begin{eqnarray}
\mathbf{Var}\{F_{M}\} &\leq &2\mathbf{Var}\{F_{M}^{(s)}\}+2\mathbf{Var}%
\{F_{M}-F_{M}^{(s)}\}  \label{var1} \\
&\leq &2\mathbf{Var}\{F_{M}^{(s)}\}+\frac{2C_1 ^{2}}{\log M}\sum_{m=1}^{M}%
\frac{1}{m}\mathbf{Var}\{R_{m}^{(s)}\},  \notag
\end{eqnarray}%
where $C_1$ is defined in (\ref{lip}) and%
\begin{equation}
R_{m}^{(s)}:=Z_{m}-Z_{m}^{(s)}=\mu_m^{-1/2}\sum_{|j|\leq m}\eta
_{j}^{(s)}. \label{zms}
\end{equation}%
Recall now that an ergodic sequence is said to be strongly mixing if%
\begin{equation}
\alpha _{k}:=\sup_{A\in \mathcal{F}_{-\infty }^{n},\;B\in \mathcal{F}%
_{k+n}^{\infty }}|P(AB)-P(A)P(B)|\rightarrow 0  \label{alk}
\end{equation}%
as $k\rightarrow \infty $ through positive values and $\alpha _{k}$ is
called the mixing coefficient.

Since the random potential is a sequence of i.i.d. random variables, the
sequence $\{\xi _{j}^{(s)}\}_{j\in \mathbb{Z}}$ of (\ref{xeg}) is strongly mixing and its
mixing coefficient is (see (\ref{alk}))
\begin{equation}
\alpha _{k}^{(s)}=\left\{
\begin{array}{cc}
\leq 1, & k\leq 2s \\
0, & k>2s.%
\end{array}%
\right.  \label{axi}
\end{equation}%
We are going to bound the first term on the right of (\ref{var1}) by using
Lemma 1 of \cite{Pe-Sh:95} on the almost sure CLT for strongly mixing
sequences and we will deal with the second term on the right of (\ref{var1})
by using the sufficiently good approximation of $\{Y_{j}\}_{j\in \mathbb{Z}}$
of (\ref{Y}) by $\{\xi _{j}^{(s)}\}_{j\in \mathbb{Z}}$ of (\ref{xeg}) as $%
s\rightarrow \infty $ following from Lemma \ref{l:asin}. Note that similar
argument has been already used in the proof of Theorem \ref{t:clt}, see (\ref%
{dc}) in Proposition \ref{p:ibli} and Theorem 18.6.3 in \cite{Ib-Li:71}.
This is obtained in Lemmas \ref{l:sims} and \ref{l:vxi} below for $%
M\rightarrow \infty $ and $s\rightarrow \infty $.
They allow us to continue (\ref{var1}) as%
\begin{equation*}
\mathbf{Var}\{F_{M}\}=O(\log s/\log M)+O(1/s^{\theta -1}),
\end{equation*}%
where $\theta >1$ (see (\ref{afcon})). Choosing here $s=(\log
M)^{1-\varepsilon },\;\varepsilon \in (0,1)$, we obtain (\ref{varf}), hence,
the theorem.$\blacksquare $


\section{Auxiliary Results}

We start with a general Central Limit Theorem for ergodic
sequences of random variables, see \cite{Ib-Li:71}, Theorems 18.6.1 -
18.6.3, more precisely. with its version involving i.i.d. random variables.

\begin{proposition}
\label{p:ibli} Let $\{X_{j}\}_{j\in \mathbb{Z}}$ be i.i.d. random variables,
$\mathcal{F}_{a}^{b}$ be the $\sigma $-algebra generated by $%
\{X_{j}\}_{j=a}^{b}$, $Y_{0}$ be a function measurable with respect to $%
\mathcal{F=F}_{-\infty }^{\infty }$. Denote $T$ the standard shift
automorphism ($X_{j+1}(\omega)=X_j(T\omega)$) and set $Y_{j}(\omega
)=Y_{0}(T^{j}\omega )$. Assume that

(i) \ $Y_{0}$ is bounded;

(ii)%
\begin{equation}
\sum_{p=1}^{\infty }\mathbf{E}\{|Y_{0}-\mathbf{E}\{Y_{0}|\mathcal{F}%
_{-p}^{p}\}|\}<\infty .  \label{dc}
\end{equation}%
Then

(a) \ $\sigma ^{2}:=\sum_{k=0}^{\infty }\mathbf{Cov}\{Y_{0},Y_{k}\}<\infty ;$

(b) if $\sigma ^{2}>0$, then%
\begin{equation}
|\Lambda |^{-1/2}\sum_{|j|\leq M}Y_{j}  \label{ns}
\end{equation}%
converges in distribution to the Gaussian random variable of zero mean and
variance $\sigma ^{2}$.
\end{proposition}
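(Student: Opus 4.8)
The plan is to prove the statement by approximating the ergodic sequence $\{Y_j\}$ by finitely-dependent sequences and then reducing to the classical central limit theorem for $m$-dependent variables. Throughout I assume without loss of generality that $\mathbf{E}\{Y_0\}=0$, and I write $Y_j^{(p)}=\mathbf{E}\{Y_j|\mathcal{F}_{j-p}^{j+p}\}$, $\eta_j^{(p)}=Y_j-Y_j^{(p)}$ and $\delta_p=\mathbf{E}\{|\eta_0^{(p)}|\}$, so that condition (\ref{dc}) reads $\sum_p\delta_p<\infty$. Since the $X_j$ are i.i.d., the variable $Y_j^{(p)}$ depends only on $\{X_i\}_{|i-j|\le p}$, so $\{Y_j^{(p)}\}_{j\in\mathbb{Z}}$ is a bounded stationary $2p$-dependent sequence.

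I would first establish part (a). For $k\ge 1$ choose $p=\lfloor(k-1)/2\rfloor$, so that $2p<k$ and the index blocks $[-p,p]$ and $[k-p,k+p]$ are disjoint; then $Y_0^{(p)}$ and $Y_k^{(p)}$ are independent, whence $\mathbf{Cov}\{Y_0^{(p)},Y_k^{(p)}\}=0$. Expanding $Y_0=Y_0^{(p)}+\eta_0^{(p)}$, $Y_k=Y_k^{(p)}+\eta_k^{(p)}$ and using $|\mathbf{Cov}\{U,W\}|\le 2\|U\|_\infty\mathbf{E}\{|W|\}$ on the three remaining cross terms yields $|\mathbf{Cov}\{Y_0,Y_k\}|\le C\delta_{\lfloor(k-1)/2\rfloor}$ with $C$ proportional to $\|Y_0\|_\infty$. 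Summing over $k$ and invoking $\sum_p\delta_p<\infty$ gives absolute convergence of $\sum_k\mathbf{Cov}\{Y_0,Y_k\}$, hence (a); the same computation shows that the variance of the normalized sum converges to $\sigma^2=\sum_{l\in\mathbb{Z}}\mathbf{Cov}\{Y_0,Y_l\}$.

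For part (b) set $U_M=(2M+1)^{-1/2}\sum_{|j|\le M}Y_j$ and $U_M^{(p)}=(2M+1)^{-1/2}\sum_{|j|\le M}Y_j^{(p)}$. For each fixed $p$ the classical CLT for bounded stationary $m$-dependent sequences (Hoeffding--Robbins, via Bernstein's big-block/small-block method) gives $U_M^{(p)}\to N(0,\sigma_p^2)$ in distribution as $M\to\infty$, with $\sigma_p^2=\sum_{|l|\le 2p}\mathbf{Cov}\{Y_0^{(p)},Y_l^{(p)}\}$; since $Y_0^{(p)}\to Y_0$ in $L^2$ and the covariances are dominated by the summable bound from (a), one gets $\sigma_p^2\to\sigma^2$. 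I would then combine these via characteristic functions:
\[
|\mathbf{E}\{e^{itU_M}\}-e^{-\sigma^2 t^2/2}|\le |t|\,\mathbf{E}\{|U_M-U_M^{(p)}|\}+|\mathbf{E}\{e^{itU_M^{(p)}}\}-e^{-\sigma_p^2 t^2/2}|+|e^{-\sigma_p^2 t^2/2}-e^{-\sigma^2 t^2/2}|.
\]
For fixed $p$ the middle term vanishes as $M\to\infty$ and the last term vanishes as $p\to\infty$, so everything reduces to controlling $\sup_M\mathbf{E}\{|U_M-U_M^{(p)}|\}\le(\sup_M\mathbf{Var}\{U_M-U_M^{(p)}\})^{1/2}$ uniformly in $M$ and making it small as $p\to\infty$.

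This uniform control is the crux of the argument and where I expect the main obstacle. Writing $\rho_p^2=\sup_M\mathbf{Var}\{(2M+1)^{-1/2}\sum_{|j|\le M}\eta_j^{(p)}\}$, stationarity and a Ces\`aro bound give $\rho_p^2\le\sum_{l\in\mathbb{Z}}|\mathbf{Cov}\{\eta_0^{(p)},\eta_l^{(p)}\}|$. For $|l|\le 2p$ I bound each term by $\mathbf{E}\{(\eta_0^{(p)})^2\}=:\delta_p^{(2)}$ (Cauchy--Schwarz and stationarity), while for $|l|>2p$ I use that $\{\eta_j^{(p)}\}$ inherits the approximation error $\delta_q$ at every range $q\ge p$, since $\eta_0^{(p)}-\mathbf{E}\{\eta_0^{(p)}|\mathcal{F}_{-q}^q\}=Y_0-\mathbf{E}\{Y_0|\mathcal{F}_{-q}^q\}$; the decomposition from (a) then gives $|\mathbf{Cov}\{\eta_0^{(p)},\eta_l^{(p)}\}|\le C\delta_{\lfloor(l-1)/2\rfloor}$. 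Hence $\rho_p^2\le(4p+1)\delta_p^{(2)}+C\sum_{|l|>2p}\delta_{\lfloor(l-1)/2\rfloor}$. The tail sum tends to $0$ with $p$; the delicate first term is handled by observing that $\delta_p^{(2)}=\|Y_0-\mathbf{E}\{Y_0|\mathcal{F}_{-p}^p\}\|_2^2$ is non-increasing in $p$ (nested $L^2$ projections) and summable (as $\delta_p^{(2)}\le 2\|Y_0\|_\infty\delta_p$), whence $p\,\delta_p^{(2)}\to 0$ by the elementary fact that a non-increasing summable sequence is $o(1/p)$. This forces $\rho_p\to 0$, closes the characteristic-function estimate, and yields $U_M\to N(0,\sigma^2)$, completing the proof.
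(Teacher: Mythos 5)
Your proof is correct, and it is a genuine, self-contained alternative to what the paper does: the paper does not prove Proposition \ref{p:ibli} at all, but refers to Theorems 18.6.1--18.6.3 of \cite{Ib-Li:71}, i.e., to the CLT for strongly mixing stationary sequences combined with the approximation of $Y_j$ by $\mathbf{E}\{Y_j|\mathcal{F}_{j-p}^{j+p}\}$ that condition (\ref{dc}) provides. You use exactly the same approximants, but you exploit the fact (which the paper itself records in (\ref{axi})) that for i.i.d. $X_j$ these approximants are $2p$-dependent rather than merely strongly mixing, so the elementary Hoeffding--Robbins $m$-dependent CLT plus a characteristic-function triangle inequality replaces the mixing machinery; what the paper's cited route buys is generality (it works when the underlying sequence is only mixing, not i.i.d.), while yours buys transparency and quantitative control. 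The crux of your argument, the uniform-in-$M$ bound $\rho_p\to 0$, is sound: the identity $\eta_0^{(p)}-\mathbf{E}\{\eta_0^{(p)}|\mathcal{F}_{-q}^{q}\}=\eta_0^{(q)}$ for $q\ge p$ is correct (tower property), and the observation that the non-increasing, summable sequence $\delta_p^{(2)}$ satisfies $p\,\delta_p^{(2)}\to 0$ legitimately disposes of the diagonal block $(4p+1)\,\delta_p^{(2)}$, which is indeed the delicate term. Two small points you should write out rather than assert: (i) the domination needed for $\sigma_p^2\to\sigma^2$ requires the bound $|\mathbf{Cov}\{Y_0^{(p)},Y_l^{(p)}\}|\le C\,\delta_{\lfloor(|l|-1)/2\rfloor}$ uniformly in $p$; it follows by the same telescoping device, since for $q=\lfloor(|l|-1)/2\rfloor<p$ one has $Y_0^{(p)}-\mathbf{E}\{Y_0^{(p)}|\mathcal{F}_{-q}^{q}\}=\mathbf{E}\{\eta_0^{(q)}|\mathcal{F}_{-p}^{p}\}$, whose $L^1$ norm is at most $\delta_q$, and $Y_0^{(q)},Y_l^{(q)}$ are independent; (ii) your standing assumption $\mathbf{E}\{Y_0\}=0$ is how the proposition is actually applied in the paper (the $Y_j$ there are the centered variables $\overset{\circ }{\gamma }_{jj}(H)$ of (\ref{Y})), and your computation identifies the limiting variance as $\sum_{l\in \mathbb{Z}}\mathbf{Cov}\{Y_0,Y_l\}$, consistent with (\ref{si1}), so the one-sided sum $\sum_{k\geq 0}$ in part (a) of the statement should be read in that light.
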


The proof of the proposition is based on the proof of the CLT for strongly
mixing ergodic sequences (see (\ref{alk})) and on the approximation of more
general ergodic sequences by strongly mixing sequences provided by condition
(\ref{dc}).

We will also need several facts on the one-dimensional discrete Schrodinger
operator with bounded potential.

We recall first the Duhamel formula for the difference of two one-parametric
groups $U_{1}(t)=e^{itA_{1}}$ and $U_{1}(t)=e^{itA_{1}}$ corresponding to
two bounded operators $A_{1}$ and $A_{2}:$%
\begin{equation}  \label{duh}
U_{2}(t)-U_{1}(t)=i\int_{0}^{|t|}U_{2}(t-s)(A_{2}-A_{1})U_{1}(s)ds,\;t\in
\mathbb{R}.
\end{equation}

\begin{lemma}
\label{l:ctu} Let $H=H_{0}+V$ be the one-dimensional discrete Schrodinger
operator with real-valued bounded potential, $U(t)=e^{itH}$ be the
corresponding unitary group and $\{U_{jk}(t)\}_{j,k\in \mathbb{Z}}$ be the
matrix of $U(t)$. Then we have for any $t\in \mathbb{R}$ and $\delta >0$%
\begin{equation}
|U_{jk}(t)|\leq e^{-\delta |j-k|+s(\delta )|t|},\;s(\delta )=2\sinh \delta .
\label{ctu}
\end{equation}
\end{lemma}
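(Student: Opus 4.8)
The plan is to prove this Combes--Thomas-type bound for the unitary group by an analytic (gauge) conjugation of $H$ with an exponential weight, reducing everything to a norm estimate for a bounded \emph{non}-self-adjoint operator. Fix $\rho\in\mathbb{R}$ and introduce the operator $H_\rho$ defined directly through its matrix elements $(H_\rho)_{jk}=e^{\rho(j-k)}H_{jk}$. Since the potential is bounded and the hopping in $H_0$ is nearest-neighbour, $H_\rho$ differs from $H$ only through the factors $e^{\pm\rho}$ on the two off-diagonals, so $H_\rho$ is a genuine bounded operator: explicitly $H_\rho=V+\cosh\rho\,H_0+\sinh\rho\,K$ with $K_{j,j+1}=1,\;K_{j,j-1}=-1$, and $\|H_\rho\|<\infty$. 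The first step is to record the matrix-element identity
\[
(e^{itH_\rho})_{jk}=e^{\rho(j-k)}U_{jk}(t),
\]
which I would establish by expanding both exponentials in their power series (norm-convergent, since $H$ and $H_\rho$ are bounded) and observing that the weights telescope along every lattice path, $(H_\rho^{n})_{jk}=e^{\rho(j-k)}(H^{n})_{jk}$. This device is precisely what lets me avoid applying the unbounded diagonal conjugation $\mathrm{diag}(e^{\rho j})$ to an $\ell^2$ vector; a naive operator conjugation would be circular, since the exponential decay of $U_{jk}$ that we are trying to prove is exactly what would be needed to make $\mathrm{diag}(e^{\rho j})\,U(t)\,\mathrm{diag}(e^{-\rho j})$ meaningful on $\ell^2$.

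The second and main step is the norm bound $\|e^{itH_\rho}\|\le e^{2|\sinh\rho|\,|t|}$, obtained from a differential (energy) inequality. For $\psi(t)=e^{itH_\rho}\psi_0$ one has $\dot\psi=iH_\rho\psi$, whence
\[
\frac{d}{dt}\|\psi(t)\|^2=-i\big\langle\psi,(H_\rho-H_\rho^{*})\psi\big\rangle .
\]
A direct computation of the two off-diagonals gives $H_\rho-H_\rho^{*}=2\sinh\rho\,K$, an anti-self-adjoint operator, so that $iK$ is self-adjoint with $\|iK\|=\|K\|=2$ (its Fourier symbol is $-2\sin\theta$, with range $[-2,2]$). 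Hence $\big|\tfrac{d}{dt}\|\psi\|^2\big|\le 4|\sinh\rho|\,\|\psi\|^2$, and Gronwall's inequality yields $\|\psi(t)\|\le e^{2|\sinh\rho|\,|t|}\|\psi_0\|$, i.e.\ the desired operator-norm bound.

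Finally I would combine the two steps. Since $|(e^{itH_\rho})_{jk}|\le\|e^{itH_\rho}\|$, the identity of the first step gives $|U_{jk}(t)|\le e^{-\rho(j-k)+2|\sinh\rho|\,|t|}$ for every real $\rho$. Choosing $\rho=\delta>0$ controls the case $j\ge k$ and choosing $\rho=-\delta$ controls the case $j\le k$, both with the same exponent $s(\delta)=2\sinh\delta$; taking the sharper of the two estimates produces $|U_{jk}(t)|\le e^{-\delta|j-k|+s(\delta)|t|}$, which is exactly (\ref{ctu}). I expect the only genuine subtlety to lie in the two rigour points above --- the power-series justification of the matrix-element identity and the Gronwall estimate for the non-self-adjoint $H_\rho$ --- both of which are resolved by exploiting boundedness of $H_\rho$; the remaining algebra of the off-diagonals is routine.
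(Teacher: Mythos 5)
Your proof is correct, and at its core it is the same Combes--Thomas-type exponential-conjugation argument as the paper's: in both proofs the weighted operator (your $H_\rho$, the paper's $DHD^{-1}=H+Q$) is split into a self-adjoint part plus an anti-self-adjoint part of norm at most $2\sinh|\rho|$, and the bound follows by taking $\rho=\pm\delta$. Indeed your decomposition $H_\rho=V+\cosh\rho\,H_0+\sinh\rho\,K$ is exactly the paper's: $V+\cosh\rho\,H_0=H+Q_1$ with $Q_1=(\cosh\rho-1)H_0$, and $\sinh\rho\,K$ is the paper's $iQ_2$, whose symbol $2i\sin p\,\sinh\rho$ matches your computation $\|iK\|=2$. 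You differ in two local but genuinely worthwhile ways. First, where the paper formally writes $DU(t)D^{-1}=e^{itDHD^{-1}}$ with the unbounded diagonal weight $D_{jk}=e^{\rho j}\delta_{jk}$ --- a step that strictly speaking needs justification, for exactly the circularity reason you point out --- you sidestep it by defining $H_\rho$ through its matrix elements and proving $(e^{itH_\rho})_{jk}=e^{\rho(j-k)}U_{jk}(t)$ via the norm-convergent power series and path-telescoping; this is legitimate because $H$ is bounded and tridiagonal, so each $(H^n)_{jk}$ is a finite sum, and it makes your version of the conjugation step more rigorous than the paper's. Second, where the paper bounds $\|e^{it(H+Q_1)+itQ_2 \cdot i^{-1}}\|$ by iterating the Duhamel formula (\ref{duh}) against the unitary group generated by the self-adjoint part, you obtain the identical bound $e^{2\sinh|\rho|\,|t|}$ from the energy identity $\frac{d}{dt}\|\psi\|^2=\pm i\,\langle\psi,(H_\rho-H_\rho^{*})\psi\rangle$ together with Gronwall's inequality; both mechanisms are standard and produce the same constant $s(\delta)=2\sinh\delta$, so the choice between them is a matter of taste. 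Net assessment: same strategy and same result as the paper, with your execution slightly more self-contained and more careful on the one point where the paper's argument is only formal.
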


\begin{proof}
Introduce the diagonal operator $D=\{D_{jk}\}_{j,k\in \mathbb{Z}}$, with $%
D_{jk}=e^{\rho j}\delta _{jk}$, $\rho \in \mathbb{R}$ and consider%
\begin{equation*}
DU(t)D^{-1}=e^{itDHD^{-1}} =e^{itH+itQ},
\end{equation*}%
where%
\begin{equation*}
Q=DHD^{-1}-H=DH_{0}D^{-1}-H_{0}.
\end{equation*}%
Since $H_{0}$ is the operator of second finite difference with the symbol $%
-2\cos p,\;p$ $\in \mathbb{T}$, the symbol of $Q$ is%
\begin{equation*}
-2\cos (p+i\rho )+2\cos p=-2\cos p(\cosh \rho -1)+2i\sin p\sinh \rho .
\end{equation*}%
Hence $Q=Q_{1}+iQ_{2}$, where $Q_{1}$ and $Q_{2}$ are selfadjoint operators
and%
\begin{equation*}
||Q_{2}||\leq 2\sinh |\rho |.
\end{equation*}%
Now, denoting $A_{2}=H+Q_{1}+iQ_{2}$ and $A_{1}=H+Q_{1}$, iterating the
Duhamel formula (\ref{duh}) and using $||e^{itA_{1}}||=1$, we obtain%
\begin{equation*}
||e^{itH_{1}-tQ_{2}}||\leq e^{|t|\;||Q_{2}||}=e^{|t|2\sinh |\rho |}.
\end{equation*}%
This and the relation%
\begin{equation*}
(DU(t)D^{-1})_{jk}=e^{\rho j}U_{jk}(t)e^{-\rho k}
\end{equation*}%
imply (\ref{ctu}).
\end{proof}

\begin{remark}
\label{r:ctr} Bound (\ref{ctu}) is an analog of the Combes-Thomas bound for
the resolvent $\{((H-z)^{-1})_{jk}\}_{j,k \in \mathbb{Z}}$ of $H$ and the
above proof uses an essentially same argument as that in the proof of
the bound, see e.g. \cite{Ai-Wa:15}.
\end{remark}

\begin{lemma}
\label{l:fdec} Let $H=H_{0}+V$ be the one-dimensional discrete Schrodinger
operator with real-valued potential and $a:\mathbb{R}\rightarrow \mathbb{R}$
admits the Fourier transform $\widehat{a}$ and%
\begin{equation}
\int_{-\infty }^{\infty }(1+|t|^{\theta })|\widehat{a}(t)|dt<\infty
,\;\theta >0..\;  \label{fta}
\end{equation}%
If $A=a(H)=\{F_{jk}\}_{j,k\in \mathbb{Z}}$, then we have%
\begin{equation}
|A_{jk}|\leq C/|j-k|^{\theta },\; C< \infty, \;j \neq k.  \label{fdec}
\end{equation}
\end{lemma}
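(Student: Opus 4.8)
The plan is to represent the matrix element $A_{jk}$ as an integral over $t$ of the unitary group $U(t)=e^{itH}$ weighted by $\widehat a$, and then to estimate that integral by splitting the $t$-axis into a central region of small $|t|$, where the Combes--Thomas-type bound of Lemma \ref{l:ctu} already forces exponential decay in $|j-k|$, and a tail of large $|t|$, where only the moment condition (\ref{fta}) on $\widehat a$ is used. The whole content is to balance the cut-off against these two mechanisms.

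First I would use the functional calculus for the bounded self-adjoint operator $H$ together with Fourier inversion to write
\[
A_{jk}=(a(H))_{jk}=c\int_{-\infty}^{\infty}\widehat a(t)\,U_{jk}(t)\,dt ,
\]
where $c$ is the normalization constant of the chosen Fourier convention and $U_{jk}(t)=(e^{itH})_{jk}$; the integral converges in operator norm because $||U(t)||=1$ and $\widehat a\in L^1$ (which follows from (\ref{fta}) since $1+|t|^\theta\ge 1$). Taking absolute values gives
\[
|A_{jk}|\le |c|\int_{-\infty}^{\infty}|\widehat a(t)|\,|U_{jk}(t)|\,dt ,
\]
and at this point I have two available bounds on the kernel: the trivial unitarity bound $|U_{jk}(t)|\le 1$, and the exponential bound $|U_{jk}(t)|\le e^{-\delta|j-k|+2\sinh(\delta)|t|}$ valid for every $\delta>0$ by Lemma \ref{l:ctu}.

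Writing $n=|j-k|\ge 1$, I would split the integral at $|t|=\epsilon n$. On the inner region $|t|\le\epsilon n$ I would insert the exponential bound: its exponent is at most $-\bigl(\delta-2\sinh(\delta)\epsilon\bigr)n$, so fixing $\delta=1$ and choosing $\epsilon=(4\sinh 1)^{-1}$ makes this exponent at most $-n/2$, whence the inner contribution is bounded by $e^{-n/2}\int_{-\infty}^{\infty}|\widehat a(t)|\,dt$. On the tail $|t|>\epsilon n$ I would use $|U_{jk}(t)|\le 1$ and the Chebyshev-type estimate
\[
\int_{|t|>\epsilon n}|\widehat a(t)|\,dt\le(\epsilon n)^{-\theta}\int_{-\infty}^{\infty}|t|^{\theta}|\widehat a(t)|\,dt ,
\]
the last integral being finite by (\ref{fta}). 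Combining the two regions yields $|A_{jk}|\le |c|\bigl(e^{-n/2}\int|\widehat a|\,dt+\epsilon^{-\theta}n^{-\theta}\int|t|^{\theta}|\widehat a|\,dt\bigr)$, and since $\sup_{n\ge1}n^{\theta}e^{-n/2}<\infty$ the exponential term is itself $O(n^{-\theta})$, giving the asserted bound $|A_{jk}|\le C/|j-k|^{\theta}$.

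The calculation is routine once the representation is in place; the only point needing care is the parameter balancing, namely choosing the cut-off $\epsilon n$ to grow linearly in $n$ while keeping $\delta-2\sinh(\delta)\epsilon>0$, so that the small-$t$ part decays faster than any power of $n$ and the tail alone produces the exact exponent $n^{-\theta}$. I would stress that a fixed $\delta$ suffices here, with no per-$n$ optimization of the Combes--Thomas parameter, which keeps the argument transparent. It is worth recording why the seemingly more natural commutator route is inadequate: with $N$ the position operator $(Nu)_j=ju_j$ one has $(j-k)A_{jk}=([N,a(H)])_{jk}$ and, since $[N,H]=[N,H_0]$ is bounded, iterating the Duhamel formula (\ref{duh}) bounds $||[N,a(H)]||$ in terms of $\int|t|\,|\widehat a|\,dt$; but repeated commutators only generate \emph{integer} powers of $|j-k|$, so this approach cannot reach non-integer $\theta$, which is precisely why I prefer the integral-splitting argument above.
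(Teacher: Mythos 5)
Your proposal is correct and follows essentially the same route as the paper: the same spectral representation (\ref{frep}), the same split of the $t$-integral into a bounded region handled by the Combes--Thomas-type bound of Lemma \ref{l:ctu} and a tail handled by the moment condition (\ref{fta}). The only difference is cosmetic: you cut at $T=\epsilon|j-k|$ with a fixed margin $\delta-2\sinh(\delta)\epsilon>0$ so the inner term decays exponentially and is absorbed into $O(|j-k|^{-\theta})$, whereas the paper takes $T=\frac{\delta}{s(\delta)}|j-k|-\theta\log|j-k|$ to balance the two terms directly; your variant is if anything slightly cleaner, since it avoids any implicit constraint on $s(\delta)$ in the paper's choice.
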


\begin{proof}
It follows from the spectral theorem
\begin{equation}  \label{frep}
A=a(H)=\int_{-\infty }^{\infty }\widehat{a}(t)U(t)dt,
\end{equation}%
hence, we have for any $T>0$
\begin{align*}
& A_{jk}=\int_{-\infty }^{\infty }\widehat{a}(t)U_{jk}(t)dt \\
& =\int_{|t|\leq T}\widehat{a}(t)U(t)dt+\int_{|t|\geq T}\widehat{a}%
(t)U(t)dt=I_{1}+I_{2}.
\end{align*}%
We have further
\begin{equation*}
|I_{1}|\leq e^{-\delta |j-k|+s(\delta )T}\int_{-\infty }^{\infty } |\widehat{%
a }(t)|dt
\end{equation*}%
by using Lemma \ref{l:ctu} and
\begin{equation*}
|I_{2}|\leq \frac{1}{T^{\theta}}\int_{-\infty }^{\infty }(1+|t|^{\theta})|%
\widehat{a}(t)|dt
\end{equation*}%
by condition (\ref{fta}) of the lemma.

Now, choosing $T=\frac{\delta}{s}|j-k| -\theta \log |j-k|$, we obtain (\ref%
{fdec}).
\end{proof}

\begin{lemma}
\label{l:fdif} Let $A=\{A_{jk}\}_{j,k\in \mathbb{Z}}$ be bounded selfadjoint
operator in $l^{2}(\mathbb{Z})$ such that
\begin{equation}
|A_{jk}|\leq C/|j-k|^{\theta},\;C<\infty ,\;\theta>1  \label{adec}
\end{equation}%
and $A_{\Lambda }=\chi _{\Lambda }A\chi _{\Lambda }=\{A_{jk}\}_{j,k\in
\Lambda }$ be its restriction to $\Lambda $. Then for any $f:\mathbb{R}%
\rightarrow \mathbb{C}$ admitting the Fourier transform $\widehat{f}$ such
that%
\begin{equation}
\int_{-\infty }^{\infty }(1+|t|)|\widehat{f}(t)|dt<\infty  \label{ft1}
\end{equation}%
we have uniformly in $V$ satisfying (\ref{qq})
\begin{equation}
\big|\tr\chi _{\Lambda }f(A_{\Lambda })\chi _{\Lambda }-\tr\chi _{\Lambda
}f(A)\chi _{\Lambda }\big|=o(L^{1/2}),\;L\rightarrow \infty .  \label{tfdif}
\end{equation}
\end{lemma}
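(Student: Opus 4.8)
The plan is to pass to the unitary group $U(t)=e^{itA}$ through the spectral representation $f(A)=\int_{-\infty}^{\infty}\widehat f(t)U(t)\,dt$ (cf. (\ref{frep})) and, likewise, $f(A_\Lambda)=\int\widehat f(t)U_\Lambda(t)\,dt$ with $U_\Lambda(t)=e^{itA_\Lambda}$ on $l^2(\Lambda)$. Regarding $\tilde A_\Lambda:=\chi_\Lambda A\chi_\Lambda$ as an operator on $l^2(\mathbb Z)$ (block-diagonal, so $\tr\chi_\Lambda f(A_\Lambda)\chi_\Lambda=\int\widehat f(t)\,\tr_\Lambda U_\Lambda(t)\,dt$), the left side of (\ref{tfdif}) becomes $\bigl|\int\widehat f(t)D(t)\,dt\bigr|$ with $D(t):=\tr_\Lambda[U_\Lambda(t)-U(t)]$; condition (\ref{ft1}) guarantees absolute convergence, so the whole task is to control $D(t)$ with the right joint dependence on $t$ and $L=|\Lambda|$. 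First I apply the Duhamel formula (\ref{duh}) with $A_2=A,\ A_1=\tilde A_\Lambda$. Since $e^{is\tilde A_\Lambda}=\chi_{\Lambda^c}+\chi_\Lambda U_\Lambda(s)\chi_\Lambda$ and the defect $W:=A-\tilde A_\Lambda$ obeys $W\chi_\Lambda=\chi_{\Lambda^c}A\chi_\Lambda$, trace cyclicity gives $|D(t)|\le\int_0^{|t|}\bigl|\tr[\chi_\Lambda U(t-s)\chi_{\Lambda^c}A\chi_\Lambda U_\Lambda(s)]\bigr|\,ds$; the difference is thus expressed entirely through the boundary block of $A$.

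Next I factor the integrand in Hilbert--Schmidt norm, $|\tr[XY]|\le\|X\|_{\mathrm{HS}}\|Y\|_{\mathrm{HS}}$, with $X=\chi_\Lambda U(t-s)\chi_{\Lambda^c}$ and $Y=\chi_{\Lambda^c}A\chi_\Lambda U_\Lambda(s)$. Unitarity of $U_\Lambda(s)$ gives $\|Y\|_{\mathrm{HS}}\le\|\chi_{\Lambda^c}A\chi_\Lambda\|_{\mathrm{HS}}$, and the decay hypothesis (\ref{adec}) with $\theta>1$ makes this finite uniformly in $\Lambda$: indeed $\|\chi_{\Lambda^c}A\chi_\Lambda\|_{\mathrm{HS}}^2=\sum_{l\in\Lambda}\sum_{k\notin\Lambda}|A_{kl}|^2\le C\sum_{l\in\Lambda}\dist(l,\Lambda^c)^{-(2\theta-1)}=O(1)$ because $2\theta-1>1$.

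The crux — and the step I expect to be the main obstacle — is a uniform, at most ballistic, bound on the remaining factor $\|\chi_\Lambda U(\tau)\chi_{\Lambda^c}\|_{\mathrm{HS}}$. Put $g(\tau):=\|\chi_\Lambda U(\tau)\chi_{\Lambda^c}\|_{\mathrm{HS}}^2$, so that $g(0)=0$ and trivially $g(\tau)\le|\Lambda|$. Differentiating and using $\tfrac{d}{d\tau}U(\tau)=iAU(\tau)$ I obtain $g'(\tau)=i\,\tr[\tilde Q(\tau)\,[\chi_\Lambda,A]]$ with $\tilde Q(\tau)=U(\tau)\chi_{\Lambda^c}U(-\tau)$. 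The commutator $[\chi_\Lambda,A]=\chi_\Lambda A\chi_{\Lambda^c}-\chi_{\Lambda^c}A\chi_\Lambda$ is supported on boundary-straddling index pairs, so $\|[\chi_\Lambda,A]\|_{\mathrm{HS}}=O(1)$ by the same computation as above. A Cauchy--Schwarz estimate that keeps only the straddling entries, together with the elementary bounds $\|\chi_\Lambda\tilde Q(\tau)\chi_{\Lambda^c}\|_{\mathrm{HS}}\le\sqrt{g(\tau)}$ and $\|\chi_{\Lambda^c}\tilde Q(\tau)\chi_\Lambda\|_{\mathrm{HS}}\le\sqrt{g(\tau)}$, yields the differential inequality $|g'(\tau)|\le C_1\sqrt{g(\tau)}$ with $C_1$ uniform in $\Lambda$ and in the operator $A$ (equivalently, in the potential $V$). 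Integrating gives $\sqrt{g(\tau)}\le (C_1/2)|\tau|$, hence $\|\chi_\Lambda U(\tau)\chi_{\Lambda^c}\|_{\mathrm{HS}}\le C\min(|\tau|,L^{1/2})$. This is in effect a Lieb--Robinson/ballistic-transport estimate, and the delicate point is precisely that no $L$-dependent factor should enter the constant; keeping the bound in Hilbert--Schmidt form (rather than in trace norm, which would carry an $L^{2-\theta}$ loss for $1<\theta<2$) is what makes $C_1$ uniform.

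Combining the three steps gives $|D(t)|\le C\int_0^{|t|}\min(u,L^{1/2})\,du\le C|t|\min(|t|,L^{1/2})$, alongside the trivial $|D(t)|\le 2L$. Therefore the left side of (\ref{tfdif}) is at most $C\int|\widehat f(t)|\,|t|\min(|t|,L^{1/2})\,dt$, and I split this integral at a fixed large $R$ and at $|t|=L^{1/2}$. The part $|t|\le R$ contributes $O(1)$, while the parts $|t|>R$ are bounded by $CL^{1/2}\int_{|t|>R}(1+|t|)|\widehat f(t)|\,dt$. Dividing by $L^{1/2}$ and letting first $L\to\infty$ and then $R\to\infty$, the tail $\int_{|t|>R}(1+|t|)|\widehat f(t)|\,dt\to0$ forces $\limsup_{L\to\infty}L^{-1/2}\bigl|\tr\chi_\Lambda f(A_\Lambda)\chi_\Lambda-\tr\chi_\Lambda f(A)\chi_\Lambda\bigr|=0$, which is exactly the claimed $o(L^{1/2})$. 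This final step is where hypothesis (\ref{ft1}) is used in full: only finiteness of the first moment of $\widehat f$ is needed, matching the borderline linear-in-$t$ growth of $D(t)$, and the gain from $O(L^{1/2})$ to $o(L^{1/2})$ comes entirely from the decay of that tail.
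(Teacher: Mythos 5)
Your proposal is correct, and its first half coincides with the paper's: both pass to $U(t)=e^{itA}$ via the representation (\ref{frep}), apply the Duhamel formula (\ref{duh}) with a comparison operator that decouples $\Lambda$ from $\overline{\Lambda}$, and reduce the trace difference to an integral involving only the boundary block $\chi_{\overline{\Lambda}}A\chi_{\Lambda}$, exactly as in (\ref{trdif}). Where you genuinely diverge is the key estimate. The paper needs no ballistic bound at all: writing the integrand as $J=\sum_{j\in\Lambda,\,k\in\overline{\Lambda}}A_{kj}B_{jk}$ with $B=\chi_{\Lambda}U_{\Lambda}(s)\chi_{\Lambda}U(t-s)$, it applies Cauchy--Schwarz row by row and observes that $(BB^{\ast})_{jj}=1$ by unitarity, whence $|J|\leq C\sum_{j\in\Lambda}\bigl(\sum_{k\in\overline{\Lambda}}|k-j|^{-2\theta}\bigr)^{1/2}=O\bigl(\sum_{j\in\Lambda}\dist(j,\overline{\Lambda})^{-(\theta-1/2)}\bigr)$ --- a bound \emph{uniform in $s$ and $t$} that is $O(L^{3/2-\theta})$ for $\theta\in(1,3/2)$ and $O(1)$ for $\theta>3/2$, so $o(L^{1/2})$ follows at once after integrating against $|t|\,|\widehat{f}(t)|$. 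You instead retain from (\ref{adec}) only the uniform Hilbert--Schmidt bound $\|\chi_{\overline{\Lambda}}A\chi_{\Lambda}\|_{\mathrm{HS}}=O(1)$ and compensate with the Lieb--Robinson-type estimate $\|\chi_{\Lambda}U(\tau)\chi_{\overline{\Lambda}}\|_{\mathrm{HS}}\leq C\min(|\tau|,L^{1/2})$, proved via the differential inequality $|g'(\tau)|\leq C_{1}\sqrt{g(\tau)}$ (which does yield the upper bound $\sqrt{g(\tau)}\leq(C_{1}/2)|\tau|$; the non-uniqueness of the square-root ODE is harmless in that direction), and you then extract $o(L^{1/2})$ by splitting the $t$-integral at a fixed $R$ and letting the tail of $\int(1+|t|)|\widehat{f}(t)|\,dt$ vanish. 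Both arguments are complete and all your individual steps check out (including the cyclicity insertions needed to pair $\tilde{Q}$ with the straddling blocks of $[\chi_{\Lambda},A]$, where all operators involved are finite rank, so the traces are well defined). The trade-off: your route is more robust, since it uses the decay hypothesis only through uniform HS-boundedness of the boundary blocks and would apply to any family of operators with that weaker property; the paper's route is shorter and quantitative, its time-uniform bound giving an explicit rate $O(L^{3/2-\theta})$ (indeed $O(1)$ for $\theta>3/2$, as remarked at the end of the paper's proof), whereas your tail-splitting erases the rate and delivers only the stated $o(L^{1/2})$, with the one-line identity $(BB^{\ast})_{jj}=1$ replacing your Gronwall argument.
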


\begin{proof}
Consider $A_{\Lambda }\oplus A_{\overline{\Lambda }}, \; \overline{\Lambda }=%
\mathbb{Z}\setminus \Lambda $ and

\begin{equation*}
A-A_{\Lambda }\oplus A_{\overline{\Lambda }}=\left(
\begin{array}{cc}
0 & \chi _{\Lambda }A\chi _{\overline{\Lambda }} \\
\chi _{\overline{\Lambda }}A\chi _{\Lambda } & 0%
\end{array}%
\right) .
\end{equation*}%
Thus, writing an analog of (\ref{frep}) for $A$ instead of $H$
and using the Duhamel formula (\ref{duh}), we obtain%
\begin{align}
&f(A)-f(A_{\Lambda }\oplus A_{\mathbb{Z}\setminus \Lambda })  \label{fdif} 
=\int_{-\infty }^{\infty }\widehat{f}(t)dt\int_{0}^{|t|}U(t-s)(\chi _{%
\overline{\Lambda }}A\chi _{\Lambda }+\chi _{\Lambda }A\chi _{\overline{%
\Lambda }})U_{\Lambda }(s)\oplus U_{\overline{\Lambda }}(s)ds,  
\end{align}
and
\begin{align}\label{trdif}
&\mathrm{Tr}\ \chi _{\Lambda }f(A)\chi _{\Lambda }-\mathrm{Tr}\ \chi
_{\Lambda }f(A_{\Lambda })\chi _{\Lambda }   
=\int_{-\infty }^{\infty }\widehat{f}(t)dt\int_{0}^{t}\mathrm{Tr}\chi
_{\Lambda }U_{\Lambda }(s)\chi _{\Lambda }U(t-s) \chi _{\overline{\Lambda}
}A\chi _{\Lambda }ds.  
\end{align}%
Denoting
\begin{align}
B=\chi _{\Lambda }U_{\Lambda }(s)\chi _{\Lambda }U(t-s):l^{2}(\mathbb{Z}%
)\rightarrow l^{2}(\Lambda )  \label{B}
\end{align}%
we can write the integrand $J$ in (\ref{trdif}) as
\begin{equation}
J=\sum_{j\in \Lambda ,k\in \overline{\Lambda }}A_{kj}B_{jk},  \label{AB}
\end{equation}%
hence 
\begin{equation*}
|J|\leq \sum_{j\in \Lambda }\Big(\sum_{k\in \overline{\Lambda }%
}|A_{kj}|^{2}\sum_{k\in \overline{\Lambda }}|B_{jk}|^{2}\Big)^{1/2}\leq
\sum_{j\in \Lambda }\Big(\sum_{k\in \overline{\Lambda}}|A_{kj}|^{2}\sum_{k%
\in \mathbb{Z}}|B_{jk}|^{2}\Big)^{1/2}.
\end{equation*}
We have in view of (\ref{B})
\begin{align*}
&\hspace{-1cm}\sum_{k%
\in \mathbb{Z}}|B_{jk}|^{2}=(BB^{\ast })_{jj}
\\&=(U_{\Lambda }(s)\chi _{\Lambda }U(t-s)U^{\ast }(t-s)\chi
_{\Lambda }U^{\ast }(s))_{jj}
=(\chi _{\Lambda }U_{\Lambda }(s)U_{\Lambda
}^{\ast }(s))_{jj}=1
\end{align*}%
since $U(t-s)$ is unitary in $l^{2}(\mathbb{Z})$ and $U_{\Lambda }(s)$ is
unitary in $l^{2}(\Lambda )$. Thus, we have in view of (\ref{adec})
\begin{equation*}
|J|\leq \sum_{j\in \Lambda }\Big(\sum_{k\in \overline{\Lambda }}|A_{kj}|^{2}%
\Big)^{1/2}\leq C\sum_{j\in \Lambda }\Big(\sum_{k\in \overline{\Lambda }%
}|k-j|^{-2\theta }|\Big)^{1/2}=o(L^{1/2})
\end{equation*}%
and (\ref{fdif}) follows. Note that for $\theta >3/2$ the r.h.s. of the
above bound is $O(1)$.
\end{proof}

\medskip Similar result was obtained in \cite{La-Sa:96} by another method.

\begin{lemma}
\label{l:asin} Let $H_{1}$ and $H_{2}$ be the one dimensional discrete
Schrodinger operators with bounded potentials $V_{1}$ and $V_{2}$
coinciding within the integer valued interval $[-p,p]$. Consider $f:\mathbb{R}%
\rightarrow \mathbb{C}$ whose Fourier transform $\widehat{f}$ is such that
\begin{equation}
\int_{-\infty }^{\infty }(1+|t|^{\theta })|\widehat{f}(t)|dt<\infty
,\;\theta >1.  \label{fdec1}
\end{equation}%
Then we have
\begin{equation}
|f_{00}(H_{1})-f_{00}(H_{2})|\leq C/p^{\theta },  \label{fadec}
\end{equation}%
where $C$ is independent  of $V_{1}$ and $V_{2}$.
\end{lemma}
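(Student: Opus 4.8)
The plan is to follow the scheme of Lemma \ref{l:fdec}, combining the spectral representation (\ref{frep}), the Duhamel formula (\ref{duh}) and the Combes--Thomas bound of Lemma \ref{l:ctu}, together with a $p$-dependent splitting of the time integral. First I would use (\ref{frep}) for both operators to write, with $U_i(t)=e^{itH_i}$,
\begin{equation*}
f_{00}(H_1)-f_{00}(H_2)=\int_{-\infty}^{\infty}\widehat f(t)\,\big(U_1(t)-U_2(t)\big)_{00}\,dt .
\end{equation*}
The decisive structural fact is that $W:=H_1-H_2=V_1-V_2$ is diagonal with $W_{ll}=0$ for $|l|\le p$ and $|W_{ll}|\le 2\overline V$ for all $l$; hence the Duhamel formula (\ref{duh}) gives
\begin{equation*}
\big(U_1(t)-U_2(t)\big)_{00}=i\int_0^{|t|}\sum_{|l|>p}\big(U_1(t-s)\big)_{0l}\,W_{ll}\,\big(U_2(s)\big)_{l0}\,ds ,
\end{equation*}
where the sum already runs only over the ``far'' sites $|l|>p$.

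Next I would insert the bound of Lemma \ref{l:ctu} for each of the two unitary factors. Since $|t-s|+|s|=|t|$ on the integration range, the two exponentials combine to $e^{-2\delta|l|+s(\delta)|t|}$, and summing the geometric series $\sum_{|l|>p}e^{-2\delta|l|}\le C_\delta e^{-2\delta p}$ yields
\begin{equation*}
\big|\big(U_1(t)-U_2(t)\big)_{00}\big|\le 2\overline V\,C_\delta\,|t|\,e^{\,s(\delta)|t|-2\delta p}.
\end{equation*}
This already displays the crucial gain $e^{-2\delta p}$, but it cannot be integrated against $\widehat f$ over all $t$, since it grows exponentially in $|t|$ while hypothesis (\ref{fdec1}) controls only polynomial growth. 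This is exactly the obstacle met in Lemma \ref{l:fdec}, and I would resolve it the same way: split the $t$-integral into $|t|\le T$ and $|t|>T$.

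On $|t|\le T$ I would use the displayed bound with $|t|e^{s(\delta)|t|}\le Te^{s(\delta)T}$, producing a contribution $\le 2\overline V C_\delta\big(\int|\widehat f(t)|\,dt\big)\,Te^{\,s(\delta)T-2\delta p}$; on $|t|>T$ I would use the crude unitarity bound $|(U_1(t)-U_2(t))_{00}|\le 2$ together with $1\le|t|^\theta/T^\theta$, producing a contribution $\le 2T^{-\theta}\int(1+|t|^\theta)|\widehat f(t)|\,dt$. It then remains to choose $T$, and the key quantitative observation is that it suffices to take $T$ \emph{linear} in $p$: fixing any $\delta>0$ and setting $T=\alpha p$ with $0<\alpha<\delta/\sinh\delta$ (so that $s(\delta)\alpha<2\delta$), the first contribution is bounded by $C\,p\,e^{-(2\delta-s(\delta)\alpha)p}$, which decays exponentially and is therefore $o(p^{-\theta})$, while the second equals $2\alpha^{-\theta}p^{-\theta}\int(1+|t|^\theta)|\widehat f(t)|\,dt$. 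Adding the two gives (\ref{fadec}). The only genuinely delicate point is this balancing of the two regimes; once one sees that the exponential gain $e^{-2\delta p}$ dominates any power of $p$, the logarithmic fine-tuning of $T$ used in Lemma \ref{l:fdec} is not even needed. Finally, every constant above depends only on $\overline V$, on $\delta,\alpha$, and on $\int(1+|t|^\theta)|\widehat f(t)|\,dt$, and the Combes--Thomas bound of Lemma \ref{l:ctu} holds uniformly over all potentials bounded by $\overline V$, so the resulting $C$ is independent of $V_1,V_2$, as claimed.
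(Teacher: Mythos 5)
Your proposal is correct and follows essentially the same route as the paper: the spectral representation (\ref{frep}), the Duhamel formula (\ref{duh}) with the diagonal perturbation supported on $|l|>p$, the Combes--Thomas bound of Lemma \ref{l:ctu}, and the splitting of the $t$-integral at a $p$-dependent cutoff $T$. The only (immaterial) deviation is your choice of a purely linear cutoff $T=\alpha p$ with $s(\delta)\alpha<2\delta$, which makes the near region exponentially small and is in fact slightly cleaner than the paper's logarithmically corrected choice $T=2\delta p/s(\delta)-\theta\log p$, which is tuned so that the near region is exactly $O(p^{-\theta})$.
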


\begin{proof}
We denote
\begin{equation*}
V_{1}=\{V^{\prime }\}_{|j|>s}\cup \{V_{j}\}_{|j|\leq
s},\;V_{2}=\{V_{j}^{\prime \prime }\}_{|j|>s}\cup \{V_{j}\}_{|j|\leq s},
\end{equation*}%
$U^{(1)}(t)=e^{itH_{1}}$ and $U^{(2)}(t)=e^{itH_{2}}$ and use (\ref{frep})
and the spectral theorem to write for any $T>0$%
\begin{eqnarray}  \label{df}
|f_{00}(H_{2})-f_{00}(H_{1})| &\leq &\int_{|t|\leq T}|\widehat{f}%
(t)||U_{00}^{(2)}(t)-U_{00}^{(1)}(t)|dt  \label{def} \\
&&+\int_{|t|\geq T}|\widehat{f}%
(t)||U_{00}^{(2)}(t)-U_{00}^{(1)}(t)|=:I_{1}+I_{2}.  \notag
\end{eqnarray}%
We have then by the Duhamel formula (\ref{duh}) and (\ref{qq})
\begin{align*}
I_{1}& \leq \int_{|t|\leq T}|\widehat{f}(t)|dt\int_{0}^{|t|}%
\sum_{|j|>s}|U_{0j}^{(2)}(t-t^{\prime })(V_{j}^{\prime \prime
}-V_{j}^{\prime })U_{j0}^{(1)}(t^{\prime })|dt^{\prime }. \\
& \leq 2\overline{V}\int_{|t|\leq T}|\widehat{f}(t)|dt\int_{0}^{|t|}%
\sum_{|j|>s}|U_{0j}^{(2)}(t-t^{\prime })||U_{j0}^{(1)}(t^{\prime
})|dt^{\prime }.
\end{align*}%
We will use now Lemma \ref{l:ctu} implying%
\begin{equation}
I_{1}\leq 2\overline{V}e^{-2\delta p+s(\delta )T}\int_{-\infty }^{\infty
}(1+|t|)\widehat{f}(t)|dt.  \label{I1}
\end{equation}%
To estimate $I_{2}$ of (\ref{df}), we write
\begin{equation}
I_{2}\leq 2\int_{|t|\geq T}|\widehat{f}(t)|dt\leq \frac{2}{T^{\theta }}%
\int_{-\infty }^{\infty }(1+|t|^{\theta })|\widehat{f}(t)|dt.  \label{I2}
\end{equation}%
Choosing now in (\ref{I1}) and (\ref{I2}) $T=2\delta p/s(\delta )-\theta
\log p$, we obtain (\ref{fadec}).
\end{proof}

\begin{lemma}
\label{l:clt1} Consider a bounded $\gamma :\mathbb{R}\rightarrow \mathbb{R}$
admitting the Fourier transform $\widehat{\gamma }$ such that
\begin{equation}
\int_{-\infty }^{\infty }(1+|t|)|\widehat{\gamma }(t)|dt<\infty .
\label{gcond}
\end{equation}%
and set $\gamma (H)=\{\gamma _{jk}(H)\}_{j,k\in \mathbb{Z}}$, where $H$ is
the one-dimensional Schrodinger operator (\ref{h}) -- (\ref{qq}) with random
i.i.d. potential. 
%
Let $\gamma _{\Lambda }$ be defined in (\ref{GL}) and
\begin{equation}  \label{sila}
\sigma _{\Lambda }^{2}=|\Lambda|^{-1}\mathbf{Var}\{\gamma _{\Lambda }\}
\end{equation}%
Then there exists the limit%
\begin{equation}  \label{siM}
\sigma ^{2}:=\lim_{\Lambda \rightarrow \infty }\sigma _{\Lambda }^{2} =%
\mathbf{E}\{(\mathcal{M}^{(0)})^{2}\}
\end{equation}%
where%
\begin{align}\label{M}
\mathcal{M}^{(0)}=&\mathbf{E}\{A_{0}(V_{0},\{V_{j}\}_{j\neq 0})|\mathcal{F}%
_{0}^{\infty }\}\\&-\int_{-\infty }^{\infty }\mathbf{E\{}A_{0}(V'_0,\{V_{j}\}_{j%
\neq 0})F(dV_0')|\mathcal{F}_{0}^{\infty }\}
\notag
\end{align}%
and%
\begin{equation}
A_{0}(V_{0},\{V_{j}\}_{j\neq 0})=V_{0}\int_{0}^{1}(\gamma ^{\prime
}(H)|_{V_{0}\rightarrow uV_{0}})_{00}du  \label{A0}
\end{equation}
\end{lemma}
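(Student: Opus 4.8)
The plan is to compute the limiting variance through the classical martingale--difference (Gordin-type) representation of $\sigma^2=\sum_{l\in\mathbb{Z}}C_l$ with respect to the \emph{reversed} filtration $\mathcal{G}_m:=\mathcal{F}_m^\infty$, which decreases as $m$ grows. First I would record the existence of the limit: since the sequence $\{Y_j:=\gamma_{jj}(H)-\mathbf{E}\{\gamma_{jj}(H)\}\}_{j\in\mathbb{Z}}$ is stationary with summable covariances $C_l=\mathbf{E}\{Y_0Y_l\}$ (this summability is what Proposition \ref{p:ibli} supplies under the approximation bound already verified in the proof of Theorem \ref{t:clt}), the elementary identity $\sigma_\Lambda^2=\sum_{l}C_l\,(1-|l|/|\Lambda|)_+$, combined with dominated convergence, gives $\sigma_\Lambda^2\to\sum_l C_l$. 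It then remains to prove the representation $\sum_l C_l=\mathbf{E}\{(\mathcal{M}^{(0)})^2\}$ of (\ref{siM})--(\ref{M}).

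Second, I would introduce the reversed martingale differences $\Delta_m(X):=\mathbf{E}\{X|\mathcal{G}_m\}-\mathbf{E}\{X|\mathcal{G}_{m+1}\}$. For the i.i.d. potential the tail $\sigma$-algebra $\bigcap_{m}\mathcal{G}_m$ is trivial, so $\mathbf{E}\{Y_j|\mathcal{G}_m\}\to Y_j$ as $m\to-\infty$ and $\to 0$ as $m\to+\infty$; hence $Y_j=\sum_{m\in\mathbb{Z}}\Delta_m(Y_j)$ with the summands orthogonal in $m$, and stationarity gives $\mathbf{E}\{\Delta_m(Y_0)\Delta_m(Y_l)\}=\mathbf{E}\{\Delta_0(Y_{-m})\Delta_0(Y_{l-m})\}$. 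Expanding $C_l=\mathbf{E}\{Y_0Y_l\}=\sum_m\mathbf{E}\{\Delta_m(Y_0)\Delta_m(Y_l)\}$ and reindexing the double sum over $(m,l)$ by $(a,b)=(-m,l-m)$ collapses everything to $\sum_l C_l=\sum_{a,b\in\mathbb{Z}}\mathbf{E}\{\Delta_0(Y_a)\Delta_0(Y_b)\}=\mathbf{E}\{D_0^2\}$, where $D_0:=\sum_{a\in\mathbb{Z}}\Delta_0(Y_a)$.

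Third, I would identify $D_0$ with $\mathcal{M}^{(0)}$. Writing $\delta_a:=\gamma_{aa}(H)-\gamma_{aa}(H|_{V_0=0})$, the part of $Y_a$ independent of $V_0$ is annihilated by $\Delta_0$ (the two conditionings differ only in whether $V_0$ is integrated out), so $\Delta_0(Y_a)=\mathbf{E}\{\delta_a|\mathcal{G}_0\}-\mathbf{E}\{\delta_a|\mathcal{G}_1\}$ and hence $D_0=\mathbf{E}\{A_0|\mathcal{G}_0\}-\mathbf{E}\{A_0|\mathcal{G}_1\}$ with $A_0=\sum_a\delta_a$. To see that this $A_0$ is the one in (\ref{A0}), set $H(u):=H|_{V_0\to uV_0}$ and integrate the trace-derivative identity $\tfrac{d}{du}\sum_a\gamma_{aa}(H(u))=\mathrm{Tr}[\gamma'(H(u))V_0E_{00}]=V_0(\gamma'(H(u)))_{00}$, where $E_{00}$ is the projection onto site $0$; this yields $A_0=\mathrm{Tr}[\gamma(H)-\gamma(H|_{V_0=0})]=V_0\int_0^1(\gamma'(H(u)))_{00}\,du$, exactly (\ref{A0}), and the conditioning on $\mathcal{F}_1^\infty$ automatically averages $V_0$ out, so $D_0=\mathcal{M}^{(0)}$ of (\ref{M}). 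This establishes (\ref{siM}) and, in turn, formula (\ref{si2}) of Theorem \ref{t:clt}.

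Finally, the main obstacle — carrying all the analytic content — is the decay, uniform in the potential, of the influence $\delta_a$ in $|a|$, since this is what legitimizes every interchange above: it guarantees $\sum_a\|\delta_a\|_\infty<\infty$ (so $A_0$ is a well-defined bounded variable and $\gamma(H)-\gamma(H|_{V_0=0})$ is trace class), $\sum_a\|\Delta_0(Y_a)\|_2<\infty$ (so the double sum converges absolutely and $D_0$ converges in $L^2$), and the continuity of conditional expectation used to pull $\sum_a$ inside. I would obtain this decay from the Duhamel formula (\ref{duh}), which gives $\delta_a=iV_0\int_{-\infty}^{\infty}\widehat{\gamma}(t)\int_0^{|t|}U_{a0}(t-s)(e^{isH|_{V_0=0}})_{0a}\,ds\,dt$, combined with the Combes--Thomas bound of Lemma \ref{l:ctu}; integrating against $\widehat{\gamma}$ and optimizing the free parameter $\delta$ against $|a|$ exactly as in the proof of Lemma \ref{l:fdec} produces a bound of order $|a|^{-\theta}$, uniform in $V$, which is summable once the smoothness of $\gamma$ is strong enough (i.e. $\theta>1$ as in (\ref{afcon})).
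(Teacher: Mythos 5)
Your argument is correct in substance, but it takes a genuinely different route from the paper's. The paper never leaves finite volume: it first replaces $\gamma_\Lambda$ by $\tau_\Lambda=\mathrm{Tr}_\Lambda\,\gamma(H_\Lambda)$ via Lemma \ref{l:fdif}, then applies the \emph{exact} martingale decomposition of the variance (\ref{mart1})--(\ref{mart2}) with respect to the filtration $\mathcal{F}_m^M$, obtaining $|\Lambda|^{-1}\mathbf{Var}\{\tau_\Lambda\}=|\Lambda|^{-1}\sum_{m}\mathbf{E}\{|\mathcal{M}_\Lambda^{(m)}|^2\}$ with no error term and each $\mathcal{M}_\Lambda^{(m)}$ uniformly bounded, and finally passes to the limit term by term using the Duhamel--Combes--Thomas estimate (\ref{gamd})--(\ref{www}). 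You instead work directly in infinite volume: a Gordin-type expansion of each $Y_j$ into reversed-martingale differences $\Delta_m$, orthogonality across $m$, and a reindexing that collapses $\sum_l C_l$ to $\mathbf{E}\{D_0^2\}$ with $D_0=\sum_a\Delta_0(Y_a)$. Your tail-triviality and orthogonality steps are sound for the i.i.d.\ potential, your identification $D_0=\mathbf{E}\{A_0|\mathcal{F}_0^\infty\}-\mathbf{E}\{A_0|\mathcal{F}_1^\infty\}=\mathcal{M}^{(0)}$ is right (the two forms agree by independence of $V_0$ from $\mathcal{F}_1^\infty$), and your trace-derivative identity $A_0=\mathrm{Tr}\,[\gamma(H)-\gamma(H|_{V_0=0})]=V_0\int_0^1(\gamma'(H(u)))_{00}\,du$ is the infinite-volume counterpart of the paper's (\ref{tvt0}). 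What each approach buys: yours derives (\ref{si1}) and (\ref{si2}) \emph{simultaneously}, making the equality of the two variance formulas transparent, whereas the paper's finite-volume decomposition is algebraically exact and therefore needs no summable spatial decay of single-site influences at all.

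Two repairs are needed. First, a hypothesis mismatch: the lemma assumes only (\ref{gcond}), i.e.\ the weight $(1+|t|)$, and the paper's proof genuinely works under that assumption alone; your proof invokes $\theta>1$ of (\ref{afcon}) twice --- for the summability of $C_l$ (through Proposition \ref{p:ibli} and Lemma \ref{l:asin}) and for $\sum_a\|\delta_a\|_\infty<\infty$ --- so as written it proves the lemma only under the stronger condition. This is harmless for the application inside Theorem \ref{t:clt}, but it is a strictly weaker statement than the lemma. Second, your decay bound for $\delta_a$ is off by one power if obtained the way you describe: your Duhamel representation carries the factor $\int_0^{|t|}ds$, so splitting at $T\sim|a|$ and estimating the tail through it gives only
\begin{equation*}
\int_{|t|\ge T}|t|\,|\widehat{\gamma}(t)|\,dt\le T^{1-\theta}\int_{-\infty}^{\infty}|t|^{\theta}|\widehat{\gamma}(t)|\,dt,
\end{equation*}
i.e.\ $|\delta_a|=O(|a|^{1-\theta})$, which is summable only for $\theta>2$. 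The fix is the one used in Lemma \ref{l:asin}: for $|t|\ge T$ bound the difference of the matrix elements of the two unitary groups by $2$, bypassing Duhamel, which yields $O(T^{-\theta})$; or simply invoke Lemma \ref{l:asin} directly with $p=|a|-1$ (after recentering at site $a$, the two potentials coincide on $[a-p,a+p]$), giving $|\delta_a|\le C/|a|^{\theta}$ uniformly in $V$, summable for every $\theta>1$.
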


\begin{proof}
It is convenient to consider%
\begin{equation}
\tau _{\Lambda }:=\mathrm{Tr}_{\Lambda }\gamma (H_{\Lambda })  \label{tal}
\end{equation}%
instead of $\gamma _{\Lambda }$ of (\ref{GL}). It follows from Lemma \ref%
{l:fdif} that
\begin{equation}
\sigma^2_\Lambda =|\Lambda |^{-1}\mathbf{Var}\{\tau _{\Lambda }\} +
o(1),\;|\Lambda |\rightarrow \infty .  \label{dtg}
\end{equation}%
To deal with $\mathbf{Var}\{\tau _{\Lambda }\}$ we will use a simple version
of the martingale techniques (see e.g. \cite{Pa-Sh:11}, Proposition 18.1.1),
according to which if $\{X_{j}\}_{j=-M}^{M}$ are the i.i.d. random
variables, $\Phi :\mathbb{R}^{2M+1}\rightarrow \mathbb{R}$ is bounded and $%
\Phi =\Phi (X_{-M},X_{-M+1},...,X_{M})$, then%
\begin{align}
& \mathbf{Var}\{\Phi \}:=\{|\Phi -\mathbf{E}\{\Phi \}|^{2}\}  \label{mart1}
=\sum_{|m|\leq M}\mathbf{E}\{|\Phi ^{(m)}-\Phi
^{(m+1)}|^{2}\},  
\end{align}%
where
\begin{equation}
\Phi ^{(m)}=\mathbf{E}\{\Phi |\mathcal{F}_{m}^{M}\},\;\Phi ^{(-M)}=\Phi
,\;\Phi ^{(M+1)}=\mathbf{E}\{\Phi \}.  \label{mart2}
\end{equation}

We choose in (\ref{mart1}) -- (\ref{mart2}) $X_{j}=V_{j},\;|j|\leq M$ and $%
\Phi =\tau _{\Lambda }$ \ (see (\ref{tal})) and we obtain%
\begin{align}
|\Lambda |^{-1}\mathbf{Var}\{\tau _{\Lambda }\} &=|\Lambda
|^{-1}\sum_{m=-M}^{M}\mathbf{E}\{|\mathcal{M}_{\Lambda }^{(m)}|^{2}\},
\label{varm} \\
\mathcal{M}_{\Lambda }^{(m)}& =\tau _{\Lambda }^{(m)}-\tau _{\Lambda
}^{(m+1)}  \notag
\end{align}%
where (see (\ref{mart2}))
\begin{equation}
\tau _{\Lambda }^{(m)}=\mathbf{E}\{\tau _{\Lambda }|\mathcal{F}%
_{m}^{M}\},\;\tau _{\Lambda }^{(-M)}=\tau _{\Lambda },\;\tau _{\Lambda
}^{(M+1)}=\mathbf{E}\{\tau _{\Lambda }\}.  \label{tlL}
\end{equation}%
By using the formula%
\begin{eqnarray}
\tau _{\Lambda }-\tau _{\Lambda }|_{V_{m}=0} &=&\int_{0}^{1}du\frac{\partial
}{\partial u}\mathrm{Tr_{\Lambda }}\ \gamma (H_{\Lambda }|_{V_{m}\rightarrow
uV_{m}})  \label{tvt0} \\
&=&V_{m}\int_{0}^{1}du(\gamma ^{\prime }(H_{\Lambda }|_{V_{m}\rightarrow
uV_{m}}))_{mm}=:A_{\Lambda }(V_{m},\{V\}_{j\neq m}).  \notag
\end{eqnarray}%
we can write%
\begin{align}  \label{Mmla}
\mathcal{M}_{\Lambda }^{(m)} &=\mathbf{E}\{A_{\Lambda }(V_{m},\{V\}_{j\neq
m})|\mathcal{F}_{m}^{M}\} \\
&-\int_{-\infty }^{\infty }A_{\Lambda }(V_{m}',\{V\}_{j\neq m})|\mathcal{F}%
_{m}^{M}\}F(dV_m').  \notag
\end{align}%
Let us show now that
\begin{equation}
\lim_{\Lambda \rightarrow \infty }|\Lambda |^{-1}\sum_{m\in \Lambda }%
\mathbf{E}\{|\mathcal{M}_{\Lambda }^{(m)}|^2\}=\mathbf{E}\{|\mathcal{M}^{(0)}|^2\},
\label{amml}
\end{equation}%
where for any $m\in \mathbb{Z}$
\begin{align}  \label{lim1}
\mathcal{M}^{(m)}&=\mathbf{E}\{A(V_{m},\{V\}_{j\neq m})|\mathcal{F}%
_{m}^{\infty}\} \\
&-\int_{-\infty }^{\infty } \mathbf{E}\{A(V_{m}',\{V\}_{j\neq m})|\mathcal{F}%
_{m}^{\infty}\}F(dV_m').  \notag
\end{align}%
Note first that since $A_{\Lambda }(V_{m},\{V\}_{j\neq m})$ does not depend
on $\{V_j\}_{|j| > M }$, we can replace $\mathcal{F}_m^M$
by $\mathcal{F}_m^\infty$
Next, it is easy to see that $\mathcal{M}_{\Lambda }^{(m)}$
is bounded in $\Lambda $ and $V$, thus the proof of (\ref{amml}) reduces to
the proof of validity with probability 1 of the relation%
\begin{equation}
\lim_{\Lambda \rightarrow \infty ,\;\mathrm{dist}(m,\{M,-M\})\rightarrow
\infty}\mathcal{M}_{\Lambda }^{(m)}=\mathcal{M}^{(m)}.  \label{lim2}
\end{equation}%
Note that $\mathcal{M}^{(m)}$ of (\ref{lim1}) differs from its prelimit form
$\mathcal{M}^{(m)}_\Lambda$ of (\ref{Mmla}) by the replacement of $%
H_{\Lambda}$ by $H$ in the r.h.s. of (\ref{tvt0}).

Indeed, if (\ref{lim2}) is valid, then we can replace $\mathcal{M}_{\Lambda
}^{(m)}$ by $\mathcal{M}^{(m)}$ in the l.h.s. of (\ref{amml}) and then take
into account that $V$ is a collection of i.i.d. random variables, hence \ $%
\mathbf{E}\{|\mathcal{M}^{(m)}|^2\}=$ $\mathbf{E}\{|\mathcal{M}^{(0)}|^2\}$ for any $%
m\in \mathbb{Z}$.

To prove (\ref{lim2}) we will use a version of formula (\ref{fdif}) with $%
f=\gamma ^{\prime }$ implying for $m\in \Lambda $%
\begin{equation}
(\gamma ^{\prime }(H)-\gamma ^{\prime }(H_{\Lambda }))_{mm}=i\int_{-\infty
}^{\infty }t\widehat{\gamma }(t)dt\int_{0}^{|t|}(U_{\Lambda }(t-s)\chi
_{\Lambda }H\chi _{\overline{\Lambda }}U(s))_{mm}ds.  \label{gamd}
\end{equation}%
Taking into account that the non-zero entries of $\chi _{\Lambda }H\chi _{\overline{\Lambda }}$ are
$-\delta _{j,M}\delta _{k,M+1}$ and $-\delta
_{j,-M}\delta _{k,-(M+1)}$, $|j|\leq M,|k|>M$, we obtain%
\begin{equation}
|(U_{\Lambda }(t-s)\chi _{\Lambda }H\chi _{\overline{\Lambda }%
}U(s))_{mm}|\leq |U_{M+1,m}(s)|+|U_{-(M+1),m}(s)|.  \label{www}
\end{equation}%
We write now the integral over $t$ in (\ref{gamd}) as the sum of the
integral $I_{1}$ over $|t|\leq T$ and that $I_{2}$ over $|t|\geq T$ for some
$T$, cf. the proofs of Lemmas \ref{l:fdec} and \ref{l:asin}. We have by
Lemma \ref{l:ctu} and (\ref{www})%
\begin{align*}
|I_{1}|\leq & 2e^{-\delta d+sT}\int_{|t|
\leq T}|t|^{2}|\widehat{\gamma }(t)|dt
\\  \leq & 2e^{-\delta d+sT}T\int_{|t|
\leq T}|t||\widehat{\gamma }(t)|dt,\;
d=\mathrm{dist}(m,\{M,-M\})
\end{align*}%
and by (\ref{www}) and the unitarity of $U(s)$%
\begin{equation*}
|I_{2}|\leq 2\int_{|t|\geq T}|t||\widehat{\gamma }(t)|dt.
\end{equation*}%
Now, choosing $sT=\delta d/2$ and taking into account (\ref{gcond}), we
obtain (\ref{lim2}), hence, the assertion of the lemma.
\end{proof}

\begin{proposition}
\label{p:pesh1} Let $\{X_{j}\}_{j\in \mathbb{Z}}$ be a sequence of random
variables on the same probability space with $\mathbf{E}\{X_{l}\}=0,\;%
\mathbf{E}\{X_{l}^{2}\}<\infty $. Put (cf. (\ref{SL} -- (\ref{sil}))%
\begin{equation}
S_{m}=\sum_{|j|\leq m}X_{l},\;Z_{m}=\mu _{m}^{-1/2}S_{m}, \;\mu
_{m}=2m+1,\;\sigma _{m}^{2}=\mathbf{E}\{Z_{m}^{2}\}  \label{ssz}
\end{equation}%
and assume:

(i) $Z_{m}\overset{\mathcal{D}}{\rightarrow }\xi_\sigma ,\;m\rightarrow \infty $,
where $\xi $ is the Gaussian random variable of zero mean and variance $%
\sigma ^{2}>0$;

(ii) for every bounded Lipschitz $f$:%
\begin{equation}
|f(x)|\leq C,\;|f(x)-f(y)|\leq C_1 |x-y|  \label{lip}
\end{equation}%
there exists $\varepsilon >0$, such that%
\begin{equation*}
\mathbf{Var}\left\{ \frac{1}{\log M}\sum_{m=1}^{M}\frac{1}{m}f\left(
Z_{m}\right) \right\} =O(1/(\log M)^{\varepsilon }),\;M\rightarrow \infty .
\end{equation*}%
Then $\{X_{j}\}_{j\in \mathbb{Z}}$ satisfies the almost sure Central Limit
Theorem, i.e., we have with probability 1%
\begin{equation*}
\lim_{M\rightarrow \infty }\frac{1}{\log M}\sum_{m=1}^{M}\frac{1}{m} \mathbf{%
1}_{\Delta }(Z_{m})=\Phi(\sigma^{-1}\Delta ),
\end{equation*}%
where $\Delta $ is an interval and $\Phi$ is the standard Gaussian law.
\end{proposition}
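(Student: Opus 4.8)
The plan is to establish the stated almost sure convergence of the logarithmic empirical averages for indicators by first proving it for a rich enough family of bounded Lipschitz test functions, and then approximating $\mathbf{1}_\Delta$ from above and below, exploiting that the limiting Gaussian law $\Phi_\sigma(\cdot):=\Phi(\sigma^{-1}\cdot)$ is atomless. Fix a bounded Lipschitz $f$ and write $G_M(f)=\frac{1}{\log M}\sum_{m=1}^M \frac{1}{m}f(Z_m)$, and let $Q_M=\frac{1}{\log M}\sum_{m\le M}\frac{1}{m}\delta_{Z_m}$ be the corresponding random logarithmic empirical measure, so that $G_M(f)=\int f\,dQ_M$.

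First I would identify the mean. By assumption (i), $Z_m \overset{\mathcal{D}}{\to}\xi_\sigma$, so for the bounded continuous $f$ we have $\mathbf{E}\{f(Z_m)\}\to \mathbf{E}\{f(\xi_\sigma)\}=:\ell(f)$. The logarithmic weights $w_{m,M}=(m\log M)^{-1}$ form a regular summation method in the sense of Silverman--Toeplitz, since $\sum_{m\le M}w_{m,M}\to 1$, each fixed weight $w_{m,M}\to 0$, and $\max_m w_{m,M}=w_{1,M}\to 0$; hence convergence of $\mathbf{E}\{f(Z_m)\}$ transfers to $\mathbf{E}\{G_M(f)\}\to \ell(f)$. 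Combined with assumption (ii), namely $\mathbf{Var}\{G_M(f)\}=O((\log M)^{-\varepsilon})$, this already gives $G_M(f)\to \ell(f)$ in $L^2$, hence in probability.

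The crucial step, and the main obstacle, is upgrading this $L^2$ convergence to convergence with probability $1$. I would choose a subsequence $M_k$ with $\log M_k=k^\rho$ for some $\rho$ with $\rho\varepsilon>1$; then $\sum_k \mathbf{Var}\{G_{M_k}(f)\}\le C\sum_k k^{-\rho\varepsilon}<\infty$, so by Chebyshev and Borel--Cantelli $G_{M_k}(f)\to \ell(f)$ almost surely. To pass from the subsequence to all $M$, I would use that for $M\in[M_k,M_{k+1}]$ one can write $G_M(f)=\frac{\log M_k}{\log M}G_{M_k}(f)+\frac{1}{\log M}\sum_{M_k<m\le M}\frac{f(Z_m)}{m}$, and since $f$ is bounded and $\log M_{k+1}/\log M_k=((k+1)/k)^\rho\to 1$, both the rescaling factor and the residual sum differ from the $k$-th term by $O(((k+1)/k)^\rho-1)\to 0$; thus $G_M(f)\to \ell(f)$ almost surely along the full sequence. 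Here the non-monotonicity of $m\mapsto f(Z_m)$ is handled purely by boundedness of $f$, not by any monotone sandwiching, and it is precisely the explicit rate $\varepsilon$ in (ii) that makes the subsequence summable while keeping $\log M_{k+1}/\log M_k\to 1$.

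Finally I would fix a countable family $\mathcal{L}$ of bounded Lipschitz functions that is convergence-determining for weak convergence on $\mathbb{R}$ and intersect the countably many full-probability events to obtain, on a single event of probability one, $\int f\,dQ_M\to \int f\,d\Phi_\sigma$ for every $f\in\mathcal{L}$; this yields the weak convergence $Q_M\Rightarrow \Phi_\sigma$ almost surely. Since $\Phi_\sigma$ is atomless, for any interval $\Delta$ I would sandwich $\mathbf{1}_\Delta$ between continuous functions agreeing with it off an arbitrarily small neighborhood of $\partial\Delta$; weak convergence together with $\Phi_\sigma(\partial\Delta)=0$ then gives $Q_M(\Delta)=\frac{1}{\log M}\sum_{m=1}^M\frac{1}{m}\mathbf{1}_\Delta(Z_m)\to \Phi_\sigma(\Delta)=\Phi(\sigma^{-1}\Delta)$ with probability one, which is the assertion. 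The delicate points are the regular-summation argument for the means and, above all, the subsequence-plus-interpolation passage to almost sure convergence, where assumption (ii) is indispensable.
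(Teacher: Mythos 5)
Your proof is correct and complete. The paper itself does not prove Proposition \ref{p:pesh1} --- it simply cites Theorem 1 of \cite{Pe-Sh:95} --- and your argument (regular logarithmic summation to identify the means, Chebyshev plus Borel--Cantelli along a subsequence with $\log M_k=k^{\rho}$, $\rho\varepsilon>1$, interpolation between subsequence points via boundedness of $f$ and $\log M_{k+1}/\log M_k\to 1$, then a countable convergence-determining family of bounded Lipschitz functions and atomlessness of the Gaussian law to pass from test functions to indicators of intervals) is essentially the standard Peligrad--Shao/Lacey--Philipp scheme underlying the cited result.
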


The proposition is a version of Theorem 1 of \cite{Pe-Sh:95} where the case
of semi-infinite stationary sequences $\{X_{l}\}_{l=1}^{\infty }$ was
considered. For another criterion of the validity of the almost sure CLT see \cite{Ib-Li:00}).

\begin{lemma}
\label{l:sims} Let $\{\xi _{j}^{(s)}\}_{j\in \mathbb{Z}}$ be defined in \ (%
\ref{xeg}), $Z _{m}^{(s)}$ be defined in (\ref{zm}) and (\ref{fms}) and
\begin{equation*}
(\sigma _{m}^{(s)})^{2}=\mathbf{E}\{(Z_{m}^{(s)})^{2}\}.
\end{equation*}%
Then we have:

(i) for every $m=1,2,...$
\begin{equation*}
|\sigma _{m}^{(s)}-\sigma _{m}|\leq C/s^{(\theta -1)},
\end{equation*}%
where $\sigma_m>0$ is given in (\ref{ssz}), $C$ is independent of $m$ and $s$ and $\theta >1$ is given in
(\ref{afcon});

(ii) for any $\delta >0$ there exists $m_{0}>0$ and $s_{0}>0\,$
such that
\begin{equation*}
|\sigma _{m}^{(s)}-\sigma |\leq \sigma \delta
\end{equation*}%
if $m>m_0$ and $s>s_0$ and $\sigma >0$ is given
in Theorem \ref{t:clt};;

(iii) 
for every $m=1,2,...$
\begin{equation*}
\mathbf{E}\{((R^{(s)})_m)^2\} \leq C/s^{\theta-1},
\end{equation*}%
where $C$ is independent of $m$ and $s$
and $\theta >1$ is given in (\ref{afcon}).
\end{lemma}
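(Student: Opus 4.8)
The plan is to reduce all three estimates to a single pointwise bound on the residuals $\eta_j^{(s)}$ of (\ref{xeg}). First I would note that, by the shift-covariance (\ref{eom}) applied to $f=\gamma$, one has $\gamma_{jj}(H)(\omega)=\gamma_{00}(H)(T^{j}\omega)$, so $\gamma_{jj}(H)$ is $\gamma_{00}$ evaluated at the potential configuration translated to be centred at $j$. Two configurations agreeing on $[j-s,j+s]$ therefore produce, after translation by $-j$, two operators whose potentials coincide on $[-s,s]$, and Lemma \ref{l:asin} with $f=\gamma$ and $p=s$ gives $|\gamma_{jj}(H_1)-\gamma_{jj}(H_2)|\leq C/s^{\theta}$. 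Since this holds uniformly over all configurations compatible with $\mathcal{F}_{j-s}^{j+s}$, the conditional range of $\gamma_{jj}(H)$ has diameter at most $C/s^{\theta}$, and as $\xi_j^{(s)}=\mathbf{E}\{Y_j\,|\,\mathcal{F}_{j-s}^{j+s}\}$ lies in that range we obtain the almost sure bound
\[
|\eta_j^{(s)}|\leq C/s^{\theta}
\]
uniformly in $j$ and $s\geq 1$. I would also record that $\{\eta_j^{(s)}\}_{j}$ is stationary, since the shift commutes with conditional expectation, so $D_l^{(s)}:=\mathbf{E}\{\eta_0^{(s)}\eta_l^{(s)}\}$ depends only on $l$.

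Next I would prove (iii), the technical heart. Writing $R_m^{(s)}=\mu_m^{-1/2}\sum_{|j|\leq m}\eta_j^{(s)}$ as in (\ref{zms}), the usual reindexing gives $\mathbf{E}\{(R_m^{(s)})^2\}=\sum_{|l|\leq 2m}(1-|l|/\mu_m)D_l^{(s)}\leq\sum_{l\in\mathbb{Z}}|D_l^{(s)}|$, so it suffices to bound $\sum_l|D_l^{(s)}|$ uniformly in $m$. For $|l|\leq 4s$ I would use the pointwise bound above, $|D_l^{(s)}|\leq C/s^{2\theta}$, contributing $O(s^{1-2\theta})$. For $|l|>4s$ the point is decorrelation: choosing a scale $r$ with $s<r<|l|/2$ I would split $\eta_j^{(s)}=\eta_j^{(r)}+(\xi_j^{(r)}-\xi_j^{(s)})$, where $\xi_j^{(r)}-\xi_j^{(s)}=\eta_j^{(s)}-\eta_j^{(r)}$ is $\mathcal{F}_{j-r}^{j+r}$-measurable, centred, and bounded by $2C/s^{\theta}$. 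Because the windows $[-r,r]$ and $[l-r,l+r]$ are then disjoint and the potential is i.i.d., the variables $\xi_0^{(r)}-\xi_0^{(s)}$ and $\xi_l^{(r)}-\xi_l^{(s)}$ are independent, so their product drops out of $D_l^{(s)}$; every surviving cross term carries a factor $\eta^{(r)}=O(r^{-\theta})$, yielding $|D_l^{(s)}|\leq C(|l|^{-2\theta}+|l|^{-\theta}s^{-\theta})$ with $r\asymp|l|$. Summing over $|l|>4s$ and using $\theta>1$ gives $O(s^{1-2\theta})$ again, so altogether $\mathbf{E}\{(R_m^{(s)})^2\}\leq C/s^{2\theta-1}\leq C/s^{\theta-1}$, which is (iii). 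This two-scale decorrelation, together with the bookkeeping of the residual $\eta^{(r)}$ terms, is the step I expect to require the most care, since the naive bound from $|\eta_j^{(s)}|\leq C/s^\theta$ alone grows like $\mu_m$ and is useless.

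Finally, (i) and (ii) are soft consequences. For (i) the reverse triangle inequality for the $L^2(\Omega)$-norm gives $|\sigma_m^{(s)}-\sigma_m|=\big|\,\|Z_m^{(s)}\|_2-\|Z_m\|_2\,\big|\leq\|R_m^{(s)}\|_2=(\mathbf{E}\{(R_m^{(s)})^2\})^{1/2}\leq C/s^{\theta-1/2}\leq C/s^{\theta-1}$ for $s\geq 1$, uniformly in $m$ (here $\sigma_m^2=\mathbf{E}\{Z_m^2\}=\|Z_m\|_2^2$ since $\mathbf{E}\{Y_j\}=0$). For (ii) I would combine (i) with the convergence $\sigma_m\to\sigma>0$ supplied by Lemma \ref{l:clt1} (equivalently (\ref{siM}), recalling $\sigma_m^2=\mu_m^{-1}\mathbf{Var}\{\gamma_{[-m,m]}\}$): given $\delta>0$, pick $m_0$ so that $|\sigma_m-\sigma|\leq\sigma\delta/2$ for $m>m_0$ and $s_0$ so that the bound in (i) is $\leq\sigma\delta/2$ for $s>s_0$, and add, obtaining $|\sigma_m^{(s)}-\sigma|\leq\sigma\delta$. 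This completes the lemma.
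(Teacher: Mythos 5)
Your proof is correct, and in two of its three parts it takes a genuinely different route from the paper. The common core is Lemma \ref{l:asin}: the paper uses it only to get the $L^{1}$ estimate $\psi_{s}=\mathbf{E}\{|\eta_{0}^{(s)}|\}=O(s^{-\theta})$, whereas you extract the stronger almost sure bound $|\eta_{j}^{(s)}|\leq C/s^{\theta}$ from the uniformity of the constant in (\ref{fadec}) over all bounded potentials; this costs nothing extra and is legitimate. For (iii), the paper makes the same stationarity expansion (\ref{varms}) but then handles the long-range terms by quoting the covariance inequality $\mathbf{E}\{\eta_{0}^{(s)}\eta_{l}^{(s)}\}\leq C\psi_{[l/3]}$ for $|l|>6s$ from the Ibragimov--Linnik machinery behind Proposition \ref{p:ibli} (see (\ref{ees})); your two-scale decomposition $\eta_{j}^{(s)}=\eta_{j}^{(r)}+(\xi_{j}^{(r)}-\xi_{j}^{(s)})$ with $r\asymp|l|$, exploiting that the centred window variables $\xi_{0}^{(r)}-\xi_{0}^{(s)}$ and $\xi_{l}^{(r)}-\xi_{l}^{(s)}$ are independent for $|l|>2r$ because the potential is i.i.d., is in effect a self-contained proof of that quoted inequality (taking $r=[l/3]$ literally reproduces it), and combined with your sup-norm bound it even yields $O(s^{1-2\theta})$, stronger than the paper's $O(s^{1-\theta})$, though both suffice for the statement. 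For (i), the paper argues directly, expanding $\sigma_{m}^{2}-(\sigma_{m}^{(s)})^{2}$ in the correlation functions $C_{l},C_{l}^{(s)}$ and using the mixing structure (\ref{axi}), which forces $C_{l}^{(s)}=0$ for $|l|>2s$, plus a separate tail bound on $\sum_{2s<|l|\leq 2m}|C_{l}|$; you instead deduce (i) from (iii) via the reverse triangle inequality $|\,\|Z_{m}^{(s)}\|_{2}-\|Z_{m}\|_{2}\,|\leq\|R_{m}^{(s)}\|_{2}$, which is shorter, avoids the tail estimate on $C_{l}$ entirely, and correctly notes that $\theta-1/2>\theta-1$ absorbs the square root. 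Part (ii) is identical in both treatments. What the paper's route buys is alignment with the strong-mixing framework it already invokes for Theorem \ref{t:clt}; what yours buys is economy and self-containedness, at the small price of having to justify the pointwise bound on $\eta_{j}^{(s)}$ (which you do, via the diameter of the conditional range and translation covariance).
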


\begin{proof}
The lemma is a version of the obvious fact $\lim_{s\rightarrow
\infty }\xi
_{m}^{(s)}=Y_{m}$ valid with probability 1 for every $m$ and following from (%
\ref{xeg}).

(i).  Since $\{Y_{j}\}_{j\in \mathbb{Z}}$ and $\{\xi _{j}^{(s)}\}_{j\in \mathbb{Z}%
} $ are ergodic sequences, we can write%
\begin{eqnarray}
\sigma _{m}^{2}-(\sigma _{m}^{(s)})^{2} &=&\sum_{|l|\leq 2s}(1-|l|/\mu
_{m})(C_{l}-C_{l}^{(s)})  \label{dsig} 
+\sum_{2s<|l|\leq 2m}(1-|l|/\mu _{m})C_{l}, 
\end{eqnarray}%
where $C_{l}=\mathbf{E}\{Y_{0}Y_{l}\}$ and $C_{l}^{(s)}=\mathbf{E}\{\xi
_{0}^{(s)}\xi _{l}^{(s)}\}$ are the correlation functions of the
corresponding sequences (see (\ref{gac}) and we took into account (\ref{axi}%
) implying that $C_{l}^{(s)}=0,\;|l|>2s$ (and that the second term
on the right is present only if $m>2s$). Since $Y_{j}=\xi
_{l}^{(s)}+\eta _{l}^{(s)}
$, we have%
\begin{equation*}
C_{l}-C_{l}^{(s)}=\mathbf{E}\{\xi _{0}^{(s)}\eta _{l}^{(s)}\}+\mathbf{E}%
\{\xi _{j}^{(s)}\eta _{0}^{(s)}\}+\mathbf{E}\{\eta _{0}^{(s)}\eta
_{l}^{(s)}\}.
\end{equation*}%
Since $\gamma$ is bounded, it follows from (\ref{Y}) --
(\ref{xeg}) that
\begin{equation}\label{coes}
|\mathbf{E}\{\xi _{0}^{(s)}\eta _{l}^{(s)}\}| \leq C \psi_s, \;
|\mathbf{E}\{\xi _{j}^{(s)}\eta _{0}^{(s)}\}\leq C \psi_s, \;
|\mathbf{E}\{\eta _{0}^{(s)}\eta _{l}^{(s)}\}| \leq C \psi_s,
\end{equation}
where
\begin{equation}
\psi _{s}=\mathbf{E}\{|\eta _{0}^{(s)}|\}  \label{psi}
\end{equation}%
and by (\ref{xeg}) and Lemma \ref{l:asin}%
\begin{equation}
\psi _{s}=O(1/s^{\theta }),\;\theta >1.  \label{ces}
\end{equation}%
This and (\ref{dsig}) imply uniformly in $m$%
\begin{align}
& |\sigma _{m}^{2}-(\sigma _{m}^{(s)})^{2}|\leq \sum_{|l|\leq
2s}|C_{l}-C_{l}^{(s)}|+\sum_{2s<|l|\leq 2m}|C_{l}|  \label{sim2} \\
& \hspace{1cm}=O(s\psi _{s})+O(1/s^{(\theta -1)})=O(1/s^{(\theta
-1)}),\;s\rightarrow \infty .  \notag
\end{align}%

(ii).  $\sigma _{m}$ of (\ref{ssz}) is strictly positive for every
$m$ and
according to Theorem \ref{t:clt} and Lemma \ref{l:clt1}%
\begin{equation*}
\lim_{m\rightarrow \infty }\sigma _{m}^{2}=\sigma ^{2}>0.
\end{equation*}%
This and (\ref{sim2}) imply the assertion.

(iii). The ergodicity of $\{\eta^{(s)}_{j}\}_{j\in \mathbb{Z}}$ implies
(cf. (\ref{dsig}))
\begin{equation}\label{varms}
\mathbf{Var}\{R_m^{(s)}\} = \sum_{|l| \leq 2m} (1-|l|/\mu
_{m})\mathbf{E}\{\eta _{0}^{(s)}\eta _{l}^{(s)}\} =\sum_{|l| \leq
6s} + \sum_{6s<|l| \le 2m }.
\end{equation}
It follows then from the proof of Proposition \ref{p:ibli} (see \cite%
{Ib-Li:71}, Theorem 18.6.3) and (\ref{axi}
) that
\begin{equation}
\mathbf{E}\{\eta _{0}^{(s)}\eta _{l}^{(s)}\} \leq C\psi
_{[l/3]}),\;|l|>6s,  \label{ees}
\end{equation}%
We will use now (\ref{coes}) -- (\ref{ces}) in the first sum on
the r.h.s. of (\ref{varms}) and (\ref{ces}) and (\ref{ees}) in the
second sum (cf. (\ref{sim2})) to get the bound
\begin{equation*}
\mathbf{Var}\{R_m^{(s)}\} \leq C(s\psi_s + \sum_{|l| > 6s}
\psi_{[l/3]}) \leq C/s^{\theta-1}.
\end{equation*}
proving the assertion.
\end{proof}
\begin{lemma}
\label{l:vxi} Let $\{\xi _{j}^{(s)}\}_{j\in \mathbb{Z}}$ and $F_{M}^{(s)}$
be defined in \ (\ref{xeg}) and (\ref{fms}) respectively. Then we have:
\begin{equation*}
\mathbf{Var}\left\{ F_{m}^{(s)}\right\} =O(\log s/\log M),\;s\rightarrow
\infty ,\;M\rightarrow \infty .\;
\end{equation*}
\end{lemma}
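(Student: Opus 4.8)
The plan is to expand the variance as a double sum of covariances and to exploit the finite range of dependence of $\{\xi_j^{(s)}\}$ recorded in (\ref{axi}). Writing $g_m:=f(Z_m^{(s)})$, I would start from
\begin{equation*}
\mathbf{Var}\{F_M^{(s)}\}=\frac{1}{(\log M)^2}\sum_{m,n=1}^M\frac{1}{mn}\,\mathbf{Cov}\{g_m,g_n\}=\frac{1}{(\log M)^2}\Big(\sum_{m=1}^M\frac{\mathbf{Var}\{g_m\}}{m^2}+2\sum_{1\le m<n\le M}\frac{\mathbf{Cov}\{g_m,g_n\}}{mn}\Big),
\end{equation*}
using the symmetry of the covariance. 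Since $f$ is bounded (see (\ref{lip})), $|\mathbf{Cov}\{g_m,g_n\}|\le C$, so the diagonal already contributes $O(1/(\log M)^2)$, and everything reduces to estimating the off-diagonal covariances for $m<n$.

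The decisive step is a decoupling estimate. For $m<n$ with $n>m+2s$ I would split the sum defining $Z_n^{(s)}$ into a near part over $|j|\le m+2s$ and a far part $\widehat Z_n^{(s)}:=\mu_n^{-1/2}\sum_{m+2s<|j|\le n}\xi_j^{(s)}$. By (\ref{xeg}) each $\xi_j^{(s)}$ is $\mathcal{F}_{j-s}^{j+s}$-measurable, so $Z_m^{(s)}$ depends only on $\{V_k\}_{|k|\le m+s}$ while $\widehat Z_n^{(s)}$ depends only on $\{V_k\}_{|k|>m+s}$; as the potential is i.i.d., $g_m$ and $f(\widehat Z_n^{(s)})$ are independent, whence $\mathbf{Cov}\{g_m,f(\widehat Z_n^{(s)})\}=0$. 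Consequently
\begin{equation*}
|\mathbf{Cov}\{g_m,g_n\}|=|\mathbf{Cov}\{g_m,f(Z_n^{(s)})-f(\widehat Z_n^{(s)})\}|\le 2C\,\mathbf{E}\big|f(Z_n^{(s)})-f(\widehat Z_n^{(s)})\big|.
\end{equation*}
Using the Lipschitz bound (\ref{lip}) and then Cauchy--Schwarz, the right side is controlled by $C\mu_n^{-1/2}\big(\mathbf{Var}\{\sum_{|j|\le m+2s}\xi_j^{(s)}\}\big)^{1/2}$. Here I would invoke Lemma \ref{l:sims}: since $\mathbf{Var}\{\sum_{|j|\le L}\xi_j^{(s)}\}=\mu_L(\sigma_L^{(s)})^2$ and $\sigma_L^{(s)}$ is bounded uniformly in $L$ and $s$, this variance is $O(L)$ uniformly in $s$. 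This yields the key bound $|\mathbf{Cov}\{g_m,g_n\}|\le C\sqrt{(m+s)/n}$ for $n>m+2s$, while for $m<n\le m+2s$ I keep the trivial bound $|\mathbf{Cov}\{g_m,g_n\}|\le C$.

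Finally I would sum the two regimes. For the far regime, $\sum_{n>m+2s}n^{-3/2}\le 2(m+s)^{-1/2}$ gives
\begin{equation*}
\frac{C}{(\log M)^2}\sum_{m=1}^M\frac{\sqrt{m+s}}{m}\sum_{n>m+2s}n^{-3/2}\le\frac{C}{(\log M)^2}\sum_{m=1}^M\frac1m=O\Big(\frac{1}{\log M}\Big).
\end{equation*}
For the near regime, splitting the $m$-sum at $m\asymp 2s$ and using $\sum_{n=m+1}^{m+2s}n^{-1}\le\log(1+2s/m)$, the part $m\le 2s$ contributes $O((\log s)^2/(\log M)^2)$ and the part $m>2s$ (where $\log(1+2s/m)\le 2s/m$ and $2s\sum_{m>2s}m^{-2}=O(1)$) contributes $O(1/(\log M)^2)$. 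Collecting the estimates gives $\mathbf{Var}\{F_M^{(s)}\}=O((\log s)^2/(\log M)^2)+O(1/\log M)=O(\log s/\log M)$, since $1/\log M\le\log s/\log M$ and $(\log s/\log M)^2\le\log s/\log M$ in the relevant range $s\le M$. The main obstacle is obtaining the decoupling covariance bound with the \emph{correct} power of $s$: the crude $2s$-dependence alone only yields $\mathbf{Var}\{\sum_{|j|\le L}\xi_j^{(s)}\}=O(Ls)$ and hence the useless bound $O(\sqrt s/\log M)$; it is essential to use the uniform boundedness of $\sigma_L^{(s)}$ from Lemma \ref{l:sims} to replace $O(Ls)$ by $O(L)$, together with the careful treatment of the diagonal band of width $2s$.
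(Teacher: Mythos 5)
Your proof is correct, and in substance it is a self-contained rendition of the argument the paper imports wholesale from Peligrad--Shao: the paper's proof simply states that repeating the proof of Lemma 1 of \cite{Pe-Sh:95} (adapted from one-sided to two-sided sums) yields the bound (\ref{pesh}), and then estimates its three terms exactly along the lines of your three regimes --- the term $C'/(\log M)^2$ (your diagonal), the mixing term $\frac{C''}{\log M}\sum_{m}\alpha_m^{(s)}/m$, which by (\ref{axi}) reduces to $\frac{1}{\log M}\sum_{m\le 2s}\frac1m=O(\log s/\log M)$ (your near band of width $2s$), and the term $\frac{C'''}{(\log M)^2}\sum_m\bigl(m^{-1/2}\sigma_{2m}^{(s)}+\dots\bigr)\sum_{l>m}l^{-3/2}=O(1/\log M)$ (your far regime), using Lemma \ref{l:sims} for the uniform boundedness of $\sigma_{2m}^{(s)}$ just as you do. The one genuine difference of route is that where Peligrad--Shao control the covariance across the gap by the general $\alpha$-mixing covariance inequality for bounded variables, you exploit the fact that here $\alpha_k^{(s)}=0$ exactly for $k>2s$: your far block $\widehat Z_n^{(s)}$ involves only $\{V_k\}_{|k|>m+s}$, hence is \emph{exactly} independent of $g_m$, so the external lemma is not needed at all. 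This buys a cleaner, fully self-contained proof whose adaptation to two-sided sums is automatic; what it gives up is generality, since the paper's route applies verbatim to arbitrary strongly mixing sequences, not just finite-range approximants.

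Two small points to tighten. First, your final bound is $O\bigl((\log s)^2/(\log M)^2\bigr)+O(1/\log M)$, which coincides with the claimed $O(\log s/\log M)$ only when $\log s=O(\log M)$; you note this yourself, and it is harmless both because the lemma is applied with $s=(\log M)^{1-\varepsilon}$ and because otherwise the claim is trivial from the boundedness of $f$. Second, the uniform boundedness of $\sigma_L^{(s)}$ in $L$ and $s$ deserves a precise source: it follows from Lemma \ref{l:sims}(i) combined with $\sup_m\sigma_m<\infty$, and the latter holds because $\sigma_m^2=\sum_{|l|\le 2m}(1-|l|/\mu_m)C_l\le\sum_{l}|C_l|<\infty$ by the summability of the correlations (Proposition \ref{p:ibli}(a)) --- not from any per-term estimate on the $\xi_j^{(s)}$, which, as you correctly observe, would only give the useless $O(Ls)$ bound.
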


\begin{proof}
Repeating almost literally the proof of Lemma 1 in \cite{Pe-Sh:95} (where
the case of semi-infinite strongly mixing sequences $\{X_{l}\}_{l=1}^{\infty
}$ was considered), we obtain
\begin{align}
& \hspace{-1.5cm}\mathbf{Var}\left\{ F_{m}^{(s)}\right\} \leq \frac{%
C^{\prime }}{\log M^{2}}+\frac{C^{\prime \prime }}{\log M}\sum_{m=1}^{.M}%
\frac{\alpha _{m}^{(s)}}{m}  \label{pesh} \\
& +\frac{C^{\prime \prime \prime }}{(\log M)^{2}}\sum_{m=1}^{M-1}
(2m^{-1/2}\sigma _{2m}^{(s)}+\mathbf{E}\{m^{-1}|\xi
_{0}^{(s)}|\})\sum_{l=m+1}^{M}\frac{1}{l^{3/2}}  \notag
\end{align}%
where $C^{\prime },C^{\prime \prime },C^{\prime \prime \prime }$ depend only
on $C$ in (\ref{lip}) and $\alpha _{m}^{(s)}$ is the mixing coefficient (\ref%
{alk}) of $\{\xi _{l}^{(s)}\}_{l\in \mathbb{Z}}$ given by (\ref{axi}). In
view of (\ref{axi}) the second term is bounded by
\begin{equation}
\frac{C^{\prime \prime }}{\log M}\sum_{m=1}^{2s}\frac{1}{m}\leq
C_{2}^{^{\prime }}\frac{\log 2s}{\log M}=O(\log s/\log M)  \label{ster}
\end{equation}%
as $M\rightarrow \infty $ and $s\rightarrow \infty $.

Consider now the third term of the r.h.s. of (\ref{pesh}). It follows from (%
\ref{xeg}) and our assumption on the boundedness of $Y_{0}$ that the
contribution of $\mathbf{E}\{|\xi _{0}^{(s)}|\}$ is $O(1/(\log M)^{2})$.
Next, given an $M$-independent $M_{0}$ of Lemma \ref{l:sims} (ii), we write
\begin{equation*}
\sum_{m=1}^{M-1}2\sigma _{2m}^{(s)}\sum_{l=m+1}^{M}\frac{1}{l^{3/2}}%
=\sum_{m=1}^{M_{0}}\sum_{l=m+1}^{M}+\sum_{m=M_{0}+1}^{M-1}\sum_{l=m+1}^{M}.
\end{equation*}%
The first double sum on the right is bounded in $M$ in view of Lemma \ref%
{l:sims} (i) and the fact that $\sigma _{m},\;m=1,2,...,M_{0}$ are bounded
(e.g. $\sigma _{m}\leq \mathbf{E}^{1/2}\{Y_{0}^{2}\}$). The second double
sum is in view of Lemma \ref{l:sims} (ii)
\begin{equation*}
O\left( \sum_{m=M_{0}+1}^{M-1}\frac{1}{m^{1/2}}\sum_{l=m+1}^{M}\frac{1}{%
l^{3/2}}\right) =O\left( \log M\right) .
\end{equation*}%
Hence, the third term on the right of (\ref{pesh}) is $O\left( 1/\log
M\right) $. This and (\ref{ster}) imply that the r.h.s. of (\ref{pesh}) is $%
O(\log s/(\log M))$.
\end{proof}



\end{document}